\def\BibTeX{{\rm B\kern-.05em{\sc i\kern-.025em b}\kern-.08em
    T\kern-.1667em\lower.7ex\hbox{E}\kern-.125emX}}
\pgfplotsset{compat=1.18}
\newcommand{\tabitem}{~~\llap{\textbullet}~~}
\newtheorem{theorem}{Theorem}
\newtheorem{definition}[theorem]{Definition}
\newtheorem{remark}[theorem]{Remark}
\newtheorem{example}{Example}
\renewcommand{\qed}{\hfill\rule{1.8mm}{1.8mm}}
\renewcommand{\proof}{\noindent{\bf Proof.\ }}
\title{Quasi-twisted codes: decoding and applications in code-based cryptography}
\author{Bhagyalekshmy S.\thanks{bhagyalekshmy.s@students.iiserpune.ac.in}}
\author[2]{Rutuja Kshirsagar \thanks{rkshirsagar@fujitsu.com}}
\affil[1]{Department of Mathematics, IISER Pune}
\affil[2]{Fujitsu Research of America, Inc.}
\date{April 2025 }
\begin{document}
\maketitle

\begin{abstract}
Quasi-twisted (QT) codes generalize several important families of linear codes, including cyclic, constacyclic, and quasi-cyclic codes. Despite their potential, to the best of our knowledge, there exists no efficient decoding algorithm for QT codes. In this work, we propose a syndrome-based decoding method capable of efficiently correcting up to $\frac{d^*-1}{2}$ errors, where $d^*$ denotes an HT-like lower bound on the minimum distance of QT codes, which we formalize here. Additionally, we introduce a Niederreiter-like cryptosystem constructed from QT codes. This cryptosystem is resistant to some classical attacks as well as some quantum attacks based on Quantum Fourier Sampling.
\end{abstract}

\maketitle

\section{Introduction}
Post-quantum cryptography has garnered significant attention due to advances in quantum computing, particularly following the development of Shor’s algorithm for prime factorization \cite{Shor.factorization}. In response to the emerging threat that quantum computers pose to classical cryptographic systems, the National Institute of Standards and Technology (NIST) initiated a standardization process for post-quantum cryptographic algorithms in 2016. Among the families of cryptographic systems under consideration, code-based cryptography \cite{CBC.HQC} is one of the few that advanced to the final stages of this process.

One of the earliest and most studied code-based schemes is the McEliece cryptosystem \cite{McE}, which was shown to be resistant to quantum attacks. A related variant, the Niederreiter cryptosystem \cite{Nied}, was introduced soon thereafter. Both rely on binary Goppa codes as the underlying structure. The security of these systems hinges on the difficulty of distinguishing Goppa codes from random linear codes — a problem that is known to be NP-hard \cite{Dec.random.code}. While these systems offer fast encryption and decryption, their practicality is hindered by the extremely large public and private key sizes.

Comprehensive introductions to code-based cryptography, including its mathematical foundations and security assumptions, are available in \cite{Intro} and \cite{Survey}.

Recent research efforts have aimed at replacing binary Goppa codes with alternative code families to reduce key sizes while preserving security. The central challenge in this approach is to ensure that the new code family maintains two key properties: the difficulty of decoding a random error, assuming a random code and the resistance to structural attacks that might reveal the algebraic properties of the code. Alternative codes explored in the literature include non-binary Goppa codes \cite{berstein.nb.goppa}, Reed-Solomon and generalized Reed-Solomon (GRS) codes \cite{baldi.RS, baldi.RS.patent, baldi.GRS, bolkema.GRS, khaturia.2RS, ivanov.GRS, khaturia.exp.RS, Tw-RS}, algebraic geometry codes \cite{janwa.AG, Tw-Herm}, Reed-Muller codes \cite{sidelnikov.RM}, low- and moderate-density parity check (LDPC and MDPC) codes \cite{baldi.QC.LDPC, misoczki.MDPC}, convolutional codes \cite{londahl.conv}, and cyclic or quasi-cyclic codes \cite{Upen-Pap, ye.cyclic, aguilar.QC}, among others.

However, many of these alternatives have been shown to be vulnerable to a variety of structural and decoding-based attacks \cite{couvreur.dist.attack, couvreur.subcode.attack, couvreur.wild.attack, couvreur.BBCRS.attack, gauthier.dist.attack, landais.attack, lavauzelle.TRS.attack, minder.attack, otmani.QC.attack, sidelnikov.GRS.attack, Wiesch, SS-attack, Sec-RS}. These vulnerabilities have highlighted the delicate balance between structure and security in code-based cryptography.

A particularly important quantum threat vector is Quantum Fourier Sampling (QFS), which underpins numerous quantum algorithms including Shor’s factorization, the discrete logarithm problem \cite{Shor.factorization}, and Simon’s algorithm \cite{simon.quantum}. As such, demonstrating that a cryptosystem can resist QFS-based attacks is a critical benchmark of quantum security. In this context, the Niederreiter-like cryptosystem based on quasi-cyclic codes \cite{Upen-Pap} has shown resilience against QFS-based attacks, offering both quantum security and practical advantages such as improved transmission and encryption rates, and notably smaller key sizes.

Building on this line of research, we propose a Niederreiter-like cryptosystem using quasi-twisted (QT) codes \cite{QT-codes} as the underlying structure. QT codes generalize quasi-cyclic, cyclic, and constacyclic codes, offering additional flexibility and potentially better cryptographic parameters. We demonstrate that our proposed system is secure against known QFS-based attacks, as well as against several prominent classical attacks.

A key component in designing a robust and efficient code-based cryptosystem is the availability of a reliable decoding algorithm tailored to the chosen code family. To this end, we present an efficient syndrome-based decoding algorithm for QT codes. This algorithm is guided by a Hartmann-Tzeng (HT)-like bound on the minimum distance of QT codes, $d^*$, which ensures the correction of $\varepsilon = \frac{d^* -1}{2}$ errors. The algorithm operates with a computational complexity quadratic in the code length, making it suitable for practical implementation without exposing structural vulnerabilities.

\subsection*{Outline.} 

The rest of the paper is organized as follows. Section \ref{sec:prelim} contains  background that is necessary to prove our results. In section \ref{sec:quasi-twisted codes}, we formalize a lower bound (HT-like bound) on the minimum distance of the quasi-twisted codes, $d^*$ and propose a syndrome-based decoding algorithm for a length $n$ QT code that corrects up to in $\varepsilon = \frac{d^* -1}{2}$ errors in $\mathcal{O}(n^2)$ time. In section \ref{sec:Niederreiter cryptosystem}, we propose a Niederreiter-like cryptosystem based on QT codes and analyze its security against some classical as well as quantum attacks.

\section{Preliminaries}\label{sec:prelim}

Given a positive integer $t$, $[t]:=\{ 0, 1,  \dots, t-1 \}$. Let $\mathbb{F}_q$ be a field where $q = p^s$ is a power of some prime, $p$ and a vector space over this field, $\mathbb{F}_q^n$ for some positive integers $s,n.$ Let $\mathbb{F}_q^*$ denote the set of non-zero elements of $\mathbb{F}_q$. The set of $m \times n$ matrices with entries in $\mathbb{F}_q$ is denoted by $\mathbb{F}_q^{m \times n}$. The entry in the $i^{th}$ row and $j^{th}$ column of a matrix $A \in \mathbb{F}_q^{m \times n}$ is denoted by $A_{ij}$. Let $\mathcal{C}$ is an $[n,k,d]_q$ linear code with length $n$, dimension $k$ and minimum distance $d$ over $\mathbb{F}_q$. The weight of a codeword, $wt(c)$ is the (Hamming) distance between the codeword, $c$ and the zero codeword. Let $G \in \mathbb{F}_q^{k \times n}$ and $H \in \mathbb{F}_q^{n-k \times n}$ denote the generator matrix and parity check matrix of $\mathcal{C}$, respectively. Given an $[n,k]$ linear code $\mathcal{C},$ the subspace of $\mathbb{F}_q^n$ containing all those vectors that are orthogonal to every codeword in $\mathcal{C}$ forms the \textit{dual code, $\mathcal{C}^\perp$} of code $\mathcal{C}$. That is, 
$$\mathcal{C}^\perp := \left\{\mathbf{u} \in \mathbb{F}_q^n: \mathbf{u\cdot v} =0, \forall\mathbf{v} \in \mathcal{C}\right\}.$$

We now turn our attention to cyclic codes and its variants. 

\begin{definition}\label{def.cyclic}{\bf (Cyclic codes)}
Suppose $\mathcal{C}$ is an $[n, k, d]$ linear code. Let $\lambda \in \mathbb{F}_q^*$. Suppose $c = (c_0, c_1, c_2, \ldots, c_{n-1}) \in \mathcal{C}$ be a codeword. Then $\mathcal{C}$ is a \textit{cyclic code} if a shift of $c$ by $1$ position,
\[(c_{n-1}, c_0, \ldots, c_{n-2}) \]
is also a codeword in $\mathcal{C}$. 
\end{definition}

Any codeword $c = (c_0, c_1, c_2, \ldots, c_{n-1})$ in a cyclic code, $\mathcal{C}$ can be denoted using a polynomial $c(X) = c_0 + c_1 X + \cdots + c_{n-1} X^{n-1} \in \mathbb{F}_q[X]$, called its generating function. Furthermore, among codewords of  $\mathcal{C}$ there exists a unique codeword, $g$ such that the degree of the associated polynomial, $g(X) \in \mathbb{F}_q[X]$ is minimal. Every polynomial $c(X)$ corresponding to a codeword of $\mathcal{C}$ can be obtained from $g(X) $ by multiplication with a polynomial $a(X) \in \mathbb{F}_q[X]$: $c(X) = a(X)g(X)$. The polynomial $g(X)$ is called the generating polynomial of $\mathcal{C}$. 

The generator polynomial of a cyclic code has its roots as powers of a field element. Based on the patterns in the set of these powers and the length of the code, a lower bound on the minimum distance of cyclic codes was proposed independently in \cite{BCH-1} and \cite{BCH-2}. This bound is known as the BCH bound. A generalization to this BCH bound was given by Hartmann and Tzeng in \cite{HT}, known as the HT-bound.

\begin{definition}{\bf (HT-Bound)}\cite[Thm. 2]{HT}\label{thm.HT}
    Consider a cyclic code, $\mathcal{C}$ of length $n$ and a generator polynomial, $g(X)$. Let $\beta \in \mathbb{F}_q^m$ be a non-zero element of order $n$. Let $\delta \geq 2, a, n_1, n_2, s $ be positive integers such that $gcd(n,n_1) = 1$ and $gcd(n,n_2) = 1.$ If $g(\beta^{a + in_1 + jn_2}) = 0$ for $i = 0, 1, \ldots, \delta-2$ and $j = 0, 1, \ldots, s,$ then the minimum distance of $\mathcal{C}$ is $d \geq \delta + s$.
\end{definition} 

We now restate the definition of quasi-twisted codes as defined by Y. Jia in \cite{QT-codes}, which are the primary focus of our work. 

\begin{definition}\label{def.QT}{\bf (Quasi-twisted codes)}
Suppose $\mathcal{C}$ is an $[n, k]$ linear code. Let $\lambda \in \mathbb{F}_q^*$ and $\ell$ be a positive integer. Suppose $c = (c_0, c_1, c_2, \ldots, c_{n-1}) \in \mathcal{C}$ is a codeword. Then $\mathcal{C}$ is a \textit{$(\lambda,\ell)$-quasi-twisted (QT) code} if a constacyclic shift of $c$ by $\ell$ positions given by,
\[(\lambda c_{n-\ell}, \lambda c_{n-\ell+1}, \ldots, \lambda c_{n-1}, c_0, \ldots, c_{n-\ell-1}) \]
is also a codeword in $\mathcal{C}$. (here, the subscripts are all taken modulo $n$.) %Quasi-twisted codes were introduced in \cite{QT-codes}. 
\end{definition}

Assume $\ell$ always divides $n$. Let $m = n/\ell$. Code $\mathcal{C}$ then corresponds to a submodule of $\left({\mathbb{F}_{q}[X]}/{\left<X^m-\lambda\right>}\right)^\ell$ in ${\mathbb{F}_{q}[X]}/{\left<X^m-\lambda\right>}.$ Cyclic codes, constacyclic codes and quasi-cyclic codes are all special cases of quasi-twisted codes. Consider a $(\lambda,\ell)$-quasi-twised code, if $\lambda = 1$ and $\ell=1$, then it is a cyclic code. Substituting $\lambda = 1$ or $q=2$, gives an $\ell$-quasi-cyclic code and substituting $\ell =1$, gives a $\lambda$-constacyclic code. Throughout the discussion of this paper, we focus on the case when $gcd(m,p) = 1,$ where $p$ is the characteristic of $\mathbb{F}_q.$ 

\subsection{Spectral theory of quasi-twisted codes}

In this section, we reiterate some important concepts from spectral theory of quasi-twisted codes that are necessary to prove some of our results. A detailed explanation of the same can be found in \cite{Spec-QT}.

\subsubsection{Reduced Gröbner basis}

Consider a finite field, $\mathbb{F}_q$ and a polynomial ring over this field, $\mathbb{F}_q[X].$ Let $\mathcal{C}$ denote an $[n=m\ell, k, d]_q$ $(\lambda,\ell)$-quasi-twisted code. Any codeword, $c \in \mathcal{C}$ can be written as an $m \times \ell$ matrix in $\mathbb{F}_q^{n}$ as given in equation \eqref{eq:codeword_QT_code}.
\begin{equation}
\label{eq:codeword_QT_code}
    c = 
    \begin{bmatrix}
        c_{0,0} & c_{0,1} & \hdots & c_{0,\ell-1}\\
        \vdots & \vdots & \ddots & \vdots \\
        c_{m-1,0} & c_{m-1,1} & \hdots & c_{m-1,\ell-1} \\
    \end{bmatrix}.
\end{equation}
Each column of this matrix can be mapped to a polynomial, $c_i(X) = \sum\limits_{j=0}^{m-1}c_{j,i}X^j \in \mathbb{F}_q[X]/{\left<X^m-\lambda\right>}$ for all $i \in [\ell]$ . Let $R$ denote $\mathbb{F}_q[X]/{\left<X^m-\lambda\right>}$.
The $R$-module isomorphism, $\phi$ in  equation \eqref{eq.4.1} maps the matrix corresponding to a codeword to a polynomial in $R^\ell$. 
\begin{equation}\label{eq.4.1}
\begin{aligned}
    & \phi :  \hspace{20mm}\mathbb{F}_q^{n} & \longrightarrow & R^\ell \\
    & c = 
    \begin{bmatrix}
        c_{0,0} & c_{0,1} & \hdots & c_{0,\ell-1}\\
        \vdots & \vdots & \ddots & \vdots \\
        c_{m-1,0} & c_{m-1,1} & \hdots & c_{m-1,\ell-1} \\
    \end{bmatrix}
    & \longmapsto & c(X) \\ %\equalto{c(X)}{(c_0(X),c_1(X), \hdots, c_{\ell-1}(X))} \\
\end{aligned}
\end{equation}
Here, $c(X)= (c_0(X),c_1(X), \hdots, c_{\ell-1}(X)) \in R^\ell$, where $c_i(X) = \sum\limits_{j=0}^{m-1}c_{j,i}X^j $ for all $i \in [\ell]$. The polynomials, $c_i(X)$ are closed under multiplication by $X$ and reduction modulo $X^{m}-\lambda$ for all $i \in [\ell]$. Therefore, the invariance of codewords in $\mathbb{F}_q^{m\ell}$ under $\lambda$-constashift by $\ell$ positions (row rotation by $\ell$ positions such that the rotated positions have weight $\lambda$), now corresponds to a matrix in $\mathbb{F}_q^{m\times l}$ closed under the row $\lambda$-constashift. That means, a $(\lambda,\ell)$-quasi-twisted code can be seen as an $R$-submodule of $R^\ell.$

M. F. Ezerman et al. has shown in \cite{Spec-QT} that a generating set of quasi-twisted codes can also be expressed in the form of a \textit{reduced Gröbner basis} with respect to the position-over-term order, represented in equation \eqref{QT.Gröbner.basis}. 

\begin{equation}
    \Tilde{G}(X) = \begin{pmatrix}\label{QT.Gröbner.basis}
        g_{0,0}(X) & g_{0,1}(X) & \hdots & g_{0,\ell-1}(X) \\
        & g_{1,1}(X) & \hdots & g_{1,\ell-1}(X) \\
        & & \ddots & \vdots\\
        \hspace{17mm} \smash{\text{\huge 0}} & & & g_{\ell-1,\ell-1}(X) \\
    \end{pmatrix}
\end{equation}

%\vspace{-4mm}
For every $(\lambda, \ell)$-quasi-twisted code, its reduced Gröbner basis $\Tilde{G}(X)$ satisfies the following properties. 
\begin{enumerate}
    \item $g_{i,j}(X) = 0, \hspace{2mm} \forall
    0 \leq j < i < \ell$.
    \vspace{-2mm}
    \item $deg(g_{i,j}(X)) < deg(g_{j,j}(X)), \hspace{2mm} \forall i<j.$
    \vspace{-2mm}
    \item \label{prop3} $g_{i,i}(X)|(X^m-\lambda), \hspace{2mm} \forall i \in [l].$
    \vspace{-2mm}
    \item If $g_{i,i}(X) = X^m-\lambda, \text{ then } g_{i,j}(X) = 0 \hspace{2mm} \forall j \neq i. $
\end{enumerate}

Every codeword polynomial $c(X) \in R^\ell$ can then be written as $c(X) = a(X)\Tilde{G}(X)$, where $a(X) = a_0 + a_1X + \cdots + a_{m-1}X^{m-1}$ in the ideal $R$. The $\mathbb{F}_q$-dimension of $\mathcal{C}$ is, 
\begin{equation}
    k = m\ell - \sum\limits_{i=0}^{\ell-1}deg\big(g_{i,i}(X)\big) = \sum\limits_{i=0}^{\ell-1}\Big(m-deg\big(g_{i,i}(X)\big)\Big)
\end{equation}
The determinant of $\Tilde{G}(X)$ is $det\big(\Tilde{G}(X)\big) = \prod\limits_{i=0}^{\ell-1}g_{i,i}(X)$, which means that every eigenvalue,  $\beta_i$ of the quasi-twisted code $\mathcal{C}$, is a root of the polynomial, $g_{i,i}(X)$ for some $i \in [\ell].$ From property \ref{prop3} of the reduced Gröbner basis, for all $i \in [m]$ each eigenvalue, $\beta_i = \alpha\xi^i,$ where $\alpha$ is a fixed $m^{th}$ root of $\lambda$ and $\xi$ is a primitive $m^{th}$ root of unity.

The \textit{algebraic multiplicity} of an eigenvalue, $\beta_i$ is defined as the largest integer, $a$ such that $(X-\beta_i)^a|det\big(\Tilde{G}(X)\big)$. Its \textit{geometric multiplicity} is the dimension of the null space of $\Tilde{G}(\beta_i)$, which is the eigenspace of $\beta_i$. That is, $\mathcal{V}_{i} := \{\mathbf{v}\in \mathbb{F}^\ell : \Tilde{G}(\beta_i)\mathbf{v}^\intercal = \mathbf{0}\}$. Here $\mathbb{F}$ is the splitting field of $X^m - \lambda$ (the smallest extension of $\mathbb{F}_q$ containing all the roots of $X^m-\lambda$). The algebraic multiplicity of an eigenvalue of a  $(\lambda,\ell)$-quasi-twisted code is equal to its geometric multiplicity \cite{Spec-QT}.

Consider an intersection of eigenspaces, $\mathcal{V}$ corresponding to a subset of eigenvalues. The minimum distance of a quasi-twisted code is then given by the minimum distance of the cyclic code, which has a generator polynomial with these eigenvalues as its roots. This forms the motivation behind the HT-like bound, discussed in theorem \ref{HT-like Bound-Th}.

\begin{definition}{\bf (Eigencode)}\label{eigencode}
    Consider an eigenspace, $\mathcal{V} \subseteq \mathbb{F}_q^\ell.$ An $[\ell, k_\mathbb{C},d_\mathbb{C}]_q$ eigencode $\mathbb{C}$ on $\mathcal{V}$ is defined in equation \eqref{eq:eigencode},
    \begin{equation}
    \label{eq:eigencode}
        \mathbb{C} := \Bigg\{c \in \mathbb{F}_q^\ell : \forall \mathbf{v} \in \mathcal{V}, \sum\limits_{i=0}^{\ell-1}v_ic_i = 0\Bigg\}.
    \end{equation}
    If for some vector $\mathbf{v} \in \mathcal{V},$ its elements $v_1, v_2, \hdots, v_{\ell-1}$ are linearly independent over $\mathbb{F}_q$, then the corresponding eigencode is $\mathbb{C} = {(0,0,\hdots,0)}$ and its minimum distance
    $d_\mathbb{C}$ is assumed to be infinity.
\end{definition}

\subsubsection{Spectral bound on minimum distance of quasi-twisted codes} 
Consider an $[m\ell, k, d]_q$ $(\lambda ,\ell)$-quasi-twisted code, $\mathcal{C}$. Let $\xi$ be a primitive $m^{th}$ root of unity and $\alpha$ be an $m^{th}$ root of $\lambda$. Let $\Omega := \{ \alpha \xi ^i : 0 \leq i \leq m-1 \}$ form the set of all $m^{th}$ roots of $\lambda$ or equivalently the set of roots of $X^m - \lambda$. Recall that for $\ell = 1,$ we get a $\lambda$-constacyclic code. Given the generator polynomial, $g(X)$ of a $\lambda$-constacyclic code, the set $L := \{\alpha \xi ^i : g(\alpha \xi ^i) = 0\}$ is called its \textit{defining set}. %\RK{why should the defining set be $\ell$? Notation is confusing to reader.}\BS{addressed}\\ 

Let $\mathbb{C}$ be an eigencode corresponding to an eigenspace $\mathcal{V}$ as stated in definition \ref{eigencode}. We now re-state a spectral bound on the minimum distance of quasi-twisted codes, which is presented in \cite[Theorem 11]{Spec-QT}. We improve this bound, as summarized in theorem \ref{HT-like Bound-Th}.
\begin{theorem}\cite[Theorem 11]{Spec-QT}\label{spec.bound.QT}
    Consider a $(\lambda,\ell)$-quasi-twisted code $\mathcal{C}$. Let $\overline{\Omega} \subset \Omega$ denote the non-empty set of eigenvalues of $\mathcal{C}.$ Consider a non-empty subset of eigenvalues, $P \subseteq \overline{\Omega}$ and a $\lambda$-constacyclic code $\mathcal{C}_P$ with its defining set $L \supseteq P.$ Let $d_P$ denote a lower bound on the minimum distance of $\mathcal{C}_P.$ Consider the intersection of eigenspaces ($\mathcal{V}_\beta$) of the eigenvalues ($\beta$) in $P,$ $\mathcal{V}_P := \bigcap\limits_{\beta \in P}\mathcal{V}_\beta$ and the corresponding eigencode $\mathbb{C}_P.$ Then, the minimum distance of the $(\lambda,\ell)$-quasi-twisted code $\mathcal{C}$ is given by, $d_\mathcal{C} \geq min\{d_P,d_{\mathbb{C}_P}\}.$
\end{theorem}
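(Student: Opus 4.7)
The plan is to show that every non-zero codeword $c \in \mathcal{C}$ has Hamming weight at least $\min\{d_P, d_{\mathbb{C}_P}\}$. Viewing $c$ as the $m \times \ell$ array of equation \eqref{eq:codeword_QT_code} with column polynomials $c_0(X), \ldots, c_{\ell-1}(X) \in R$, I will exploit two elementary inequalities: $wt(c) \geq wt(c_j)$ for any single column, and $wt(c) \geq \#\{j : c_j \neq 0\}$, the number of non-zero columns. A dichotomy on whether the evaluation vector $c(\beta) := (c_0(\beta), \ldots, c_{\ell-1}(\beta))$ vanishes at every $\beta \in P$ will produce the two bounds $d_P$ and $d_{\mathbb{C}_P}$ respectively.

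The central algebraic step links the reduced Gröbner basis to the eigenspaces. Writing $c(X) = a(X)\tilde{G}(X)$ for some $a(X) \in R^\ell$ and evaluating at an eigenvalue $\beta$ gives $c(\beta) = a(\beta)\tilde{G}(\beta)$; for every $v \in \mathcal{V}_\beta$ the defining relation $\tilde{G}(\beta)v^\intercal = 0$ then forces $\langle c(\beta), v\rangle = a(\beta)\tilde{G}(\beta)v^\intercal = 0$. Intersecting over $\beta \in P$ places $c(\beta)$ inside (the $\mathbb{F}$-extension of) the eigencode $\mathbb{C}_P = \mathcal{V}_P^\perp$ for every $\beta \in P$.

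The case analysis then proceeds as follows. If $c(\beta) = 0$ for every $\beta \in P$, then since $c \neq 0$ some column $c_{j_0}(X)$ is a non-zero element of $R$ vanishing on $P$; because its coefficients lie in $\mathbb{F}_q$, it also vanishes on the Frobenius closure of $P$, hence $c_{j_0}$ is a non-zero codeword of a $\lambda$-constacyclic code whose defining set contains $P$, and the pattern-based bound $d_P$ applies, giving $wt(c) \geq wt(c_{j_0}) \geq d_P$. Otherwise there is $\beta_0 \in P$ with $c(\beta_0) \neq 0$; the eigencode observation makes $c(\beta_0)$ a non-zero codeword of $\mathbb{C}_P$ over $\mathbb{F}$, and since $c_j(\beta_0) \neq 0$ forces $c_j(X) \neq 0$, we get $wt(c) \geq \#\{j : c_j \neq 0\} \geq wt(c(\beta_0)) \geq d_{\mathbb{C}_P}$.

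The subtle point I expect to require care is verifying that the minimum distance of $\mathbb{C}_P$ is preserved under the extension of scalars from $\mathbb{F}_q$ to $\mathbb{F}$; this is because the rank of a parity-check matrix for $\mathbb{C}_P$ is invariant under field extension, so any $\mathbb{F}$-linear dependence of $d$ columns implies an $\mathbb{F}_q$-linear dependence of the same $d$ columns. The remaining details — the Frobenius-closure step for $c_{j_0}$ and the fact that $d_P$, being a BCH/HT-type bound keyed to the pattern $P$, lower-bounds the distance of \emph{any} $\lambda$-constacyclic code whose defining set contains $P$ — are routine once the Gröbner/eigenspace bridge is in place.
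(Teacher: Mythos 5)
Your Case 2 contains a genuine gap. The Gröbner-basis computation only gives $c(\beta_0)\mathbf{v}^\intercal=0$ for all $\mathbf{v}\in\mathcal{V}_P$, i.e.\ $c(\beta_0)$ lies in the orthogonal complement of $\mathcal{V}_P$ \emph{inside $\mathbb{F}^\ell$}. That space is in general strictly larger than the scalar extension $\mathbb{C}_P\otimes_{\mathbb{F}_q}\mathbb{F}$, because $\mathcal{V}_P$ is an $\mathbb{F}$-subspace that need not be definable over $\mathbb{F}_q$; your ``subtle point'' about rank invariance under field extension only applies to codes cut out by parity checks with entries in $\mathbb{F}_q$, which $\mathcal{V}_P$ is not. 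Definition \ref{eigencode} itself exhibits the extreme case: if some $\mathbf{v}\in\mathcal{V}_P$ has $\mathbb{F}_q$-linearly independent entries, then $\mathbb{C}_P=\{0\}$ and $d_{\mathbb{C}_P}=\infty$, while the $\mathbb{F}$-orthogonal complement of $\mathcal{V}_P$ is a nonzero code, and your Case 2 still occurs. Concretely, take $q=2$, $m=3$, $\ell=2$, $\lambda=1$, $\tilde{G}(X)=\begin{pmatrix} X+1 & X\\ 0 & X^2+X+1\end{pmatrix}$, and $P=\{\omega\}$ with $\omega$ a primitive cube root of unity in $\mathbb{F}_4$: the eigenspace at $\omega$ is spanned by $(1,\omega)$, so $\mathbb{C}_P=\{0\}$ and $d_{\mathbb{C}_P}=\infty$, yet the codeword $c=(X+1,\,X)$ has $c(\omega)=(\omega^2,\omega)\neq 0$ and weight $3$, so the chain $wt(c)\ \geq\ wt(c(\beta_0))\ \geq\ d_{\mathbb{C}_P}$ is false. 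Your dichotomy therefore does not prove the bound; the eigencode distance can never be extracted from the evaluation vector $c(\beta_0)$, which lives over $\mathbb{F}$, only from data living over $\mathbb{F}_q$.

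For comparison (the paper does not prove this statement itself; it quotes \cite[Theorem 11]{Spec-QT}), the standard argument uses a different dichotomy. For $\mathbf{v}\in\mathcal{V}_P$ set $f_{\mathbf{v}}(X):=\sum_{j}c_j(X)v_j$; the same Gröbner-basis step shows $f_{\mathbf{v}}(\beta)=0$ for every $\beta\in P$. If $f_{\mathbf{v}}\neq 0$ for some $\mathbf{v}$, then $f_{\mathbf{v}}$ is a nonzero word of a $\lambda$-constacyclic code over $\mathbb{F}$ whose defining set contains $P$, and since the coefficient of $X^i$ in $f_{\mathbf{v}}$ is $\sum_j c_{i,j}v_j$, it can be nonzero only when the $i$-th \emph{row} of the array \eqref{eq:codeword_QT_code} is nonzero; hence $wt(c)\geq\#\{\text{nonzero rows}\}\geq wt(f_{\mathbf{v}})\geq d_P$. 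If instead $f_{\mathbf{v}}=0$ for every $\mathbf{v}\in\mathcal{V}_P$, then every row $(c_{i,0},\ldots,c_{i,\ell-1})\in\mathbb{F}_q^\ell$ is orthogonal to all of $\mathcal{V}_P$, i.e.\ lies in $\mathbb{C}_P$ itself with no scalar extension needed, and a nonzero row yields $wt(c)\geq d_{\mathbb{C}_P}$. In short, the eigencode bound must be applied to the rows of the codeword array, not to $c(\beta_0)$. Your Case 1 (all columns vanish on $P$) is fine under the reading of $d_P$ as a defining-set-pattern bound valid for every constacyclic code whose defining set contains $P$ --- a reading the theorem itself requires, as you correctly note --- but Case 2 must be replaced by the row-based argument above.
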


\subsection{Niederreiter cryptosystem}\label{Nied.desc.}
%\RK{shorten the desciption, add a diagram rather than explanation and table.}

This cryptosystem was introduced by H. Niederreiter in \cite{Nied}. It has three sub-algorithms: key generation, encryption, and decryption. Figure \ref{fig:Niederreiter} summarizes the Niederreiter cryptosystem.

\begin{figure}[h!]
    \centering
    \includegraphics[width=1\linewidth]{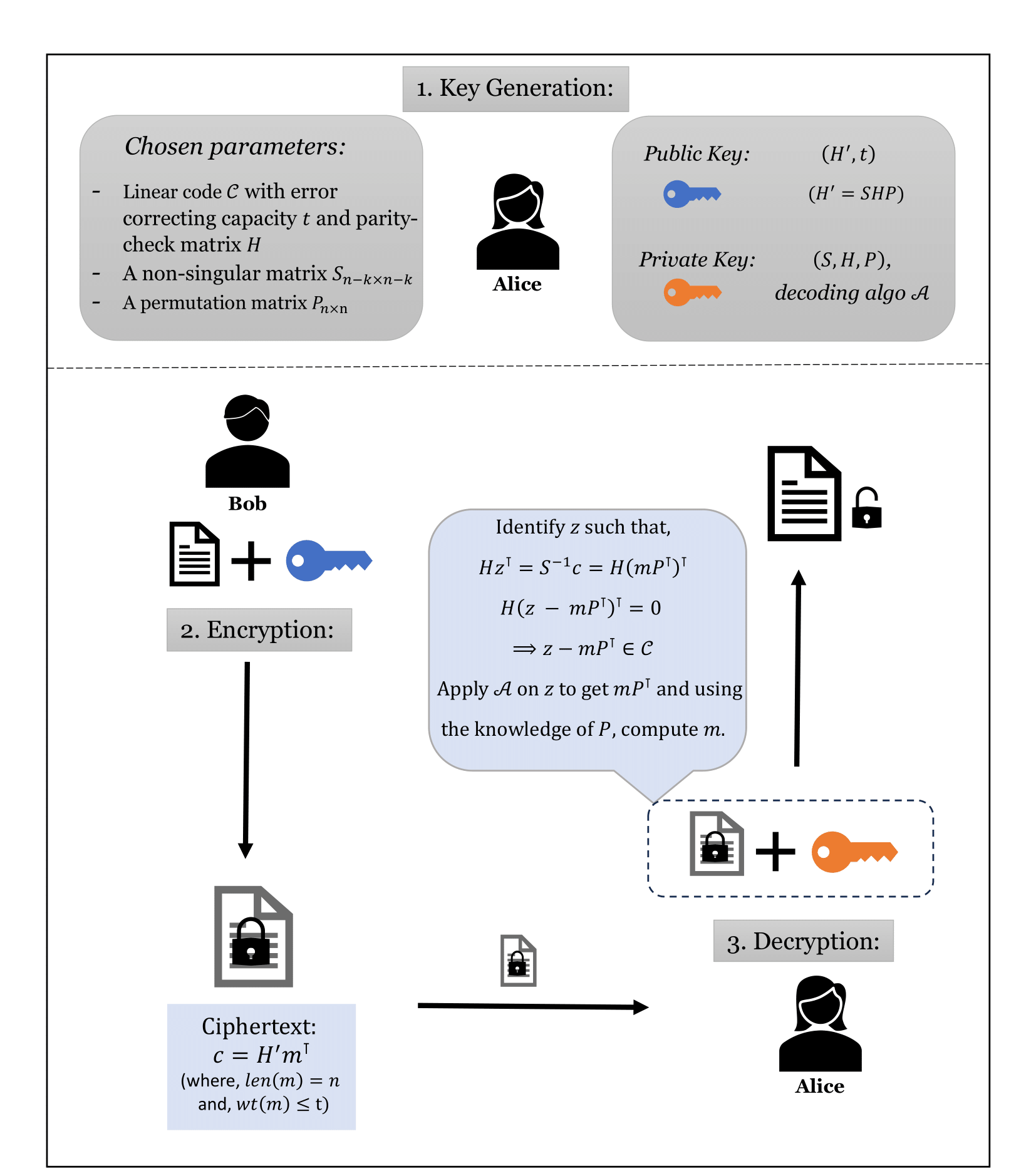}
    \caption{Niederreiter Cryptosystem: Alice chooses a linear code, $\mathcal{C}$ with a parity check matrix, $H$ and an efficient decoding algorithm, $\mathcal{A}$ that can correct up to $t$ errors. She also chooses two other matrices $S$ and $P$ as described above. She generates her public key, $(H'= SHP,t)$ and her private key, $(S, H, P)$ and algorithm $\mathcal{A}$. She publishes her public key, which is used by Bob to encrypt his message, $m$. Bob sends the ciphertext, $c = H'm^\intercal$. Alice decrypts it back to the intended message, using her private key.} 
    \label{fig:Niederreiter}
\end{figure}

\section{Decoding quasi-twisted codes}\label{sec:quasi-twisted codes}
In this section, we formulate a Hartmann-Tzeng(HT)-like bound on the minimum distance of quasi-twisted codes, $d^* $. %$= min\big(\delta + s, d_\mathbb{C}\big)$. 
Furthermore, we describe a syndrome-based decoding algorithm of quasi-twisted codes that corrects up to $\varepsilon \left(\leq \frac{d^*-1}{2}\right)$ errors in $\mathcal{O}(n^2)$ time. Here, $n$ is the length of the quasi-twisted code. 

\subsection{HT-like bound on minimum distance of quasi-twisted codes}\label{Bound on QT}
%\vspace{-2mm}
The HT bound on the minimum distance of cyclic codes is presented in definition \ref{thm.HT}. Similar bounds have been defined for constacyclic as well as quasi-cyclic codes. Here, we present an HT-like bound on the minimum distance of quasi-twisted codes. This HT-like bound can be deduced from \cite[Remark 5]{Spec-QT}, following the motivation behind the HT-like bound for quasi-cyclic codes (\cite[Theorem 1]{Dec-QC}). However, to the best of our knowledge, this is the first instance of a formal proof for this bound.
%however, to the best of our knowledge, this lower bound is not formally written anywhere. 

\begin{theorem}[\textbf{HT-like Bound}]\label{HT-like Bound-Th} 
Consider an $[m\ell, k, d]_q$ $(\lambda ,l)$-quasi-twisted code $\mathcal{C}$. For positive integers $ a, n_1, n_2, s$ and  $\delta \geq 2$ with $gcd(m,n_1) = 1$ and $gcd(m,n_2) < \delta,$ define the set,
\begin{align*}
    D := & \{ a, a+n_1, \ldots, a+ (\delta -2)n_1, \\
    & a + n_2, a+ n_2 +n_1, \ldots, a+n_2 + (\delta -2)n_1, \\
    & \hspace{6mm} \vdots \hspace{15mm} \vdots \hspace{8mm} \hdots \hspace{11mm} \vdots \\
    & a + sn_2, a+ sn_2 +n_1, \ldots, a+sn_2 + (\delta -2)n_1\}. \\
\end{align*}
%$D := \{ a, a+n_1, \ldots, a+ (\delta -2)n_1,$ \\
%\hspace{11mm} $a + n_2, a+ n_2 +n_1, \ldots, a+n_2 + (\delta -2)n_1,$\\
%\hspace{17mm}$\vdots$ \hspace{15mm} $\vdots$ \hspace{8mm} $\hdots$ \hspace{11mm} $\vdots$\\
%\hspace{13mm}$a + sn_2, a+ sn_2 +n_1, \ldots, a+sn_2 + (\delta -2)n_1\}$\\
 
For some $a \geq 0$, the eigenvalues of the quasi-twisted code $\mathcal{C}$ are $\beta_i = \alpha \xi ^i, \forall i \in D$. 
Let the corresponding eigenspaces be denoted by $\mathcal{V}_i, \forall i \in D.$ Consider the intersection of these eigenspaces, $\mathcal{V} := \bigcap\limits_{i\in D} \mathcal{V}_i$ and an eigenvector $\mathbf{v} = (v_0, v_1, \ldots, v_{\ell-1}) \in \mathcal{V}.$
Let the eigencode given by $\mathcal{V}$ (that is, the direct sum of eigencodes defined by $\mathcal{V}_i$) be $\mathbb{C}$ with minimum distance $d_\mathbb{C}$.
If 
\vspace{.5mm}
\begin{equation}\label{Th-eqn}
    c(\alpha \xi^{a+i_1n_1+i_2n_2})\mathbf{v}^\intercal = 0, \hspace{2mm} \forall i_1 \in [\delta -1], i_2 \in [s+1] 
\end{equation}
\vspace{.5mm}
holds true %for an eigenvector $\textbf{v} = (v_0, v_1, \ldots, v_{\ell-1}) \in \mathcal{V}$ and 
for all $c(X) = \big(c_0(X), c_1(X), \ldots, c_{\ell-1}(X)\big) \in R^\ell$, then the HT-like bound on the minimum distance, $d$ of the quasi-twisted code, $\mathcal{C}$ is 
\begin{equation}
\label{eq:HT.like.bound.QT}
d \geq d^* = min\big\{\delta + s, d_\mathbb{C}\big\}.
\end{equation}
\end{theorem}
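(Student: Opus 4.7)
My plan is to derive this HT-like bound as a direct application of the spectral minimum-distance bound of Ezerman et al.\ (Theorem~\ref{spec.bound.QT}), specialized to the rectangular grid of eigenvalues indexed by $D$. Observe that the two-term minimum on the right-hand side of \eqref{eq:HT.like.bound.QT} has exactly the shape delivered by Theorem~\ref{spec.bound.QT}, so the proof decomposes into two essentially independent tasks: identifying the eigencode contribution and bounding the minimum distance of an auxiliary $\lambda$-constacyclic code whose defining set contains $P := \{\alpha\xi^i : i \in D\}$.

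For the eigencode contribution, I take $\mathcal{V}_P := \bigcap_{i \in D}\mathcal{V}_i$, which is exactly the space $\mathcal{V}$ in the statement, so the corresponding eigencode $\mathbb{C}_P$ is precisely $\mathbb{C}$, and the first term of the bound delivered by Theorem~\ref{spec.bound.QT} is $d_{\mathbb{C}_P} = d_\mathbb{C}$. For the constacyclic piece I would take $\mathcal{C}_P$ to be the $\lambda$-constacyclic code of length $m$ whose generator polynomial is the least-degree divisor of $X^m - \lambda$ vanishing on every $\alpha\xi^i$ with $i \in D$; by construction $L = P$. It then remains to prove an HT-style bound $d_P \geq \delta + s$ for this $\lambda$-constacyclic code.

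The set $D$ is organized as a $(\delta-1)\times(s+1)$ grid $\{a + i_1 n_1 + i_2 n_2 : 0 \le i_1 \le \delta-2,\, 0 \le i_2 \le s\}$, which is exactly the shape exploited by Hartmann and Tzeng. I would mimic the classical proof: supposing for contradiction that some $c(X) \in \mathcal{C}_P$ has weight $w \le \delta + s - 1$, the parity-check equations $c(\alpha\xi^{a + i_1 n_1 + i_2 n_2}) = 0$ assemble into a $(\delta-1)(s+1) \times w$ matrix built from blocks whose entries are powers of the $w$ nonzero roots. The hypothesis $\gcd(m, n_1) = 1$ forces $\xi^{n_1}$ to be a primitive $m$-th root of unity, so each fixed-$i_2$ slice is a genuine $(\delta-1)\times w$ Vandermonde block. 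The relaxed hypothesis $\gcd(m, n_2) < \delta$ then ensures that the $s+1$ slices stacked vertically are not redundant: the set of starting exponents $\{a + i_2 n_2 \bmod m : 0 \le i_2 \le s\}$, reduced modulo the arithmetic progression of step $n_1$ and length $\delta-1$, still yields $s+1$ distinct cosets, so the combined system has rank $\ge \delta + s > w$, a contradiction.

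The main obstacle, and the subtle point that distinguishes this from a textbook HT argument, is precisely the relaxed gcd hypothesis $\gcd(m,n_2) < \delta$ in place of the stronger $\gcd(m,n_2) = 1$. My plan is to isolate a clean combinatorial lemma stating that, under $\gcd(m,n_1) = 1$ and $\gcd(m,n_2) < \delta$, the grid $D$ reduced modulo $m$ contains $s+1$ arithmetic progressions of step $n_1$ and length $\delta-1$ whose starting residues are pairwise distinct modulo the subgroup generated by $n_1$; this is exactly what is needed to guarantee that no row of the Vandermonde-like matrix coincides with another, and so the rank argument goes through. Once this is established, combining $d_P \ge \delta + s$ with $d_{\mathbb{C}_P} = d_\mathbb{C}$ through Theorem~\ref{spec.bound.QT} yields $d \ge d^* = \min\{\delta + s,\, d_\mathbb{C}\}$, completing the proof.
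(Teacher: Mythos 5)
Your overall route coincides with the paper's: both proofs specialize the spectral bound of theorem \ref{spec.bound.QT} to $P=\{\alpha\xi^{i}: i\in D\}$, note that $\mathcal{V}_P=\bigcap_{i\in D}\mathcal{V}_i=\mathcal{V}$ so the eigencode term is exactly $d_\mathbb{C}$, and then need $d_P\geq \delta+s$ for a $\lambda$-constacyclic code whose defining set contains $P$. The only divergence is how that last inequality is obtained: the paper simply invokes the known HT-type bound for constacyclic codes, \cite[Corollary 2.ii]{Spec-QT2}, whereas you propose to re-derive it from scratch.

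That re-derivation, as sketched, has a genuine gap. First, the stated conclusion that the $(\delta-1)(s+1)\times w$ evaluation matrix has rank at least $\delta+s>w$ is impossible: a matrix with $w$ columns has rank at most $w$. What the argument actually needs is \emph{full column rank} $w$ whenever $w\leq \delta+s-1$, so that the support vector of the putative low-weight codeword, which lies in the kernel, is forced to vanish. Second, the combinatorial lemma you propose (the $s+1$ starting residues $a+i_2n_2 \bmod m$ lie in distinct cosets relative to the step-$n_1$ progressions) is not by itself sufficient: each fixed-$i_2$ slice is a Vandermonde block of rank $\min(\delta-1,w)$, and stacking blocks with distinct starting exponents does not additively combine ranks. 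Pushing the rank from $\delta-1$ up to $w$ is precisely the delicate content of the Hartmann--Tzeng argument (or the van Lint--Wilson shifting method \cite{*multi.}), and it is exactly where the relaxed hypothesis $\gcd(m,n_2)<\delta$, as opposed to $\gcd(m,n_2)=1$, must be exploited; your lemma does not capture this. The cleanest repair is to do what the paper does and cite the constacyclic HT bound of \cite{Spec-QT2}; otherwise you must carry out that classical rank argument in full. The reduction to theorem \ref{spec.bound.QT} itself is fine and matches the paper.
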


\begin{proof}
Recall that an eigenspace, $\mathcal{V}_i$ of an eigenvalue, $\beta_i$ is defined as, 
\begin{equation}
    \mathcal{V}_{i} := \Big\{\mathbf{v}\in \mathbb{F}^\ell : \Tilde{G}(\beta_i)\mathbf{v}^\intercal = \mathbf{0}\Big\}.
\end{equation}
Since $\mathcal{V} := \bigcap\limits_{i\in D} \mathcal{V}_i$ is the intersection of the eigenspaces $\mathcal{V}_i$, we can conclude that 
\vspace{-1mm}
\begin{equation}\label{eqn5.2}
    \Tilde{G}(\beta_i)\mathbf{v}^\intercal = \mathbf{0}, \hspace{2mm} \forall i \in D, \forall \mathbf{v} \in \mathcal{V}.
\end{equation}
Recall that for any codeword, its corresponding codeword polynomial $c(X)\in R^\ell$ can be written as $c(X) = a(X)\Tilde{G}(X).$ Now, the LHS of condition \eqref{Th-eqn} can be simplified as follows. 
\vspace{-1mm}
\begin{equation}
\begin{split}
    c(\alpha \xi^{a+i_1n_1+i_2n_2})\mathbf{v}^\intercal = & c(\alpha \xi^{i})\mathbf{v}^\intercal, \hspace{2mm} \forall i \in D\\
    = & a(\alpha \xi^{i}))\Tilde{G}(\alpha \xi^{i})\mathbf{v}^\intercal, \hspace{2mm} \forall i \in D\\
    = & 0 \text{ (using equation \eqref{eqn5.2})}
\end{split}
\end{equation}
Let $\overline{\Omega}$ denote the set of all eigenvalues of the $(\lambda,\ell)$-quasi-twisted code $\mathcal{C}.$ The set ${\Omega} = \{\beta_i : i \in D\}$ is a subset of $\overline{\Omega}.$ In other words, %the premise of this theorem can be considered as a special case of theorem \ref{spec.bound.QT}, where $P = \overline{\Omega}.$ 
the premise of this theorem is same as that of theorem \ref{spec.bound.QT}. Therefore, the minimum distance of the $(\lambda,\ell)$-quasi-twisted code $\mathcal{C}$ is $d_\mathcal{C} \geq min\{d_{\Omega},d_{\mathbb{C}_{\Omega}}\},$ where we have the following observations on $d_{\Omega}$ and $d_{\mathbb{C}_{\Omega}}.$ 

\cite[Corollary 2.ii]{Spec-QT2} gives a bound on the minimum distance of a $\lambda$-constacyclic code. Therefore, we have $d_{\Omega} \geq \delta+s.$ 

Since ${\Omega} = \{\beta_i : i \in D\}, \mathbb{C}_{\Omega}$ is same as $\mathbb{C},$ the eigencode corresponding to the common eigenspace of eigenvalues in ${\Omega}$, the minimum distance of a $(\lambda,\ell)$-quasi-twisted code $\mathcal{C}$ is given by, $d \geq d^* = min\big\{\delta + s, d_\mathbb{C}\big\}.$
\qed

\subsection{A syndrome-based decoding algorithm}\label{subsec:dec-algo}

We begin by defining some polynomials and matrices that are necessary to develop our syndrome-based decoding algorithm for quasi-twisted codes.
    
Consider an $[m\ell, k, d]_q$ $(\lambda ,\ell)$-quasi-twisted code $\mathcal{C}$. 
Let $r(X) \in R^\ell$ be a polynomial corresponding to a received word, $r \in \mathbb{F}_q^n$ (under the $R$-module isomorphism defined in equation \eqref{eq.4.1}) and $e(X) \in R^\ell$ be a polynomial corresponding to an error vector, $e \in \mathbb{F}_q^n$. A received word, $r$ can be written as a summation of some codeword, $c$, and an error vector, $e$. Thus, we can write the corresponding polynomial, $r(X)$ as a sum of the corresponding codeword polynomial, $c(X)$ and the corresponding error polynomial, $e(X)$ as defined in equation \eqref{eq:rcd.word.poly}. 

\begin{equation}\label{eq:rcd.word.poly}
\begin{split}
    r(X) & = \big(r_0(X), r_1(X), \ldots, r_{\ell-1}(X) \big)\\
    & = \big(c_0(X)+e_0(X), c_1(X)+e_1(X), \ldots, c_{\ell-1}(X)+e_{\ell-1}(X) \big)\\
\end{split}
\end{equation}
Let $\mathcal{E} = \{i_1, i_2, \ldots, i_\varepsilon\}$ be a set of error locations. Note that $\mathcal{E}$ is also a union of all the error locations $\mathcal{E}_i$ in each polynomial $r_i(X) \in R$, as defined in equation \eqref{eq:err.loc.poly}.
\begin{equation}\label{eq:err.loc.poly}
    \mathcal{E} := \bigcup_{i=0}^{\ell-1}\mathcal{E}_i.
\end{equation}

We can therefore establish an upper bound on the number of error locations, $\varepsilon$. 
\begin{equation}
    |\mathcal{E}| = \varepsilon \leq \sum_{i=0}^{\ell-1}|\mathcal{E}_i|.
\end{equation}

For quasi-twisted code, 
\begin{equation}
    c_j(X) = \sum_{i \in [m]}c_{i,j}X^i \text{ and }
    e_j(X) = \sum_{i \in \mathcal{E}_i }e_{i,j}X^i, \hspace{2mm} \forall j \in [\ell].
\end{equation}

\newpage
\textbf{\underline{Syndrome Polynomials:}}

\vspace{2mm}
Let $\mathcal{V} = \bigcap\limits_{i_1\in [\delta-1],i_2\in [s+1]}\mathcal{V}_{a+i_1n_1+i_2n_2}$. Consider an eigenvector $\mathbf{v}=(v_0,v_1,\hdots,v_{\ell-1})\in \mathcal{V}$. The $(s+1)$ syndrome polynomials are defined as follows.
\begin{equation}
\begin{split}
    S_t(X) & := \sum\limits_{i=0}^{\infty}\left(\sum\limits_{j=0}^{\ell-1} r_j(\alpha \xi^{a+in_1+tn_2})v_j\right)X^i \hspace{2mm} \text{mod } X^{\delta-1} \\
    & = \sum\limits_{i=0}^{\delta-2}\left(\sum\limits_{j=0}^{\ell-1} r_j(\alpha \xi^{a+in_1+tn_2})v_j\right)X^i, \hspace{2mm} \forall t \in [s+1]. \label{S(X)}
\end{split}
\end{equation}
Using equation \eqref{Th-eqn} from theorem \ref{HT-like Bound-Th} for all $t \in [s+1]$, we can simplify $S_t(X)$.
\begin{equation}
    S_t(X) = \sum\limits_{i=0}^{\delta-2}\left(\sum\limits_{j=0}^{\ell-1} e_j(\alpha \xi^{a+in_1+tn_2})v_j\right)X^i.
\end{equation}

\textbf{\underline{Error locator polynomial:}}

\vspace{2mm}
We use the generalized Berlekamp-Massey algorithm \cite{Gen.BM-Algo} in the decoding procedure. Using this algorithm, we get the error locator polynomial to be a polynomial with $\xi^{-in_1}, \forall i \in \mathcal{E}$ as its roots. Therefore, we define the error locator polynomial as follows.
\begin{equation}\label{Lambda_poly}
    \begin{split}
        \Lambda(X) := & \prod_{i\in \mathcal{E}}(1-X\xi^{in_1})\\
        = & 1 - \Lambda_1X + \Lambda_2X^2 - \ldots \pm \Lambda_{\varepsilon}X^\varepsilon 
    \end{split}
\end{equation}

\textbf{\underline{Error evaluator polynomials:}}

\vspace{2mm}
For all $t \in [s+1]$, let $\Omega_t(X)$ denote error evaluator polynomials defined by a linear system of equations. 
\begin{equation}\label{Omega}
    \Lambda(X) \cdot S_t(X) \equiv \Omega_t(X) \hspace{2mm} \text{mod} \hspace{1mm} X^{\delta-1},\hspace{2mm} \forall t \in [s+1]
\end{equation}

\begin{remark}\label{Key-Eqns.}
    The $(s+1)$ equations in the linear system of equations \eqref{Omega} form the `Key Equations'. 
    Solving these equations jointly is equivalent to applying the generalized Berlekamp-Massey algorithm (\cite{Gen.BM-Algo}) 
    or the generalized expanded Euclidean algorithm (\cite{Gen.EA-Algo}) on $(X^{\delta-1}, S_0(X), S_1(X), \ldots, S_s(X))$. We use the former one for our analysis.
    %This gives us, as output, the error locator polynomial with roots $\xi^{-in_1}, \forall i \in \mathcal{E}$.
\end{remark}

Let us now look at a matrix representation of the syndrome polynomial and its decomposition carried out in accordance with a matrix operation $*$, which was introduced in \cite{*multi.}. A more detailed explanation can be found in \cite[Section VI.A]{Gen.BM-Algo}. 

The syndrome matrix, $S$ is defined as follows.
\vspace{-5mm}
\begin{equation}
\begin{split}
    & \\
    S := &  \big(S^{\left<0\right>} \hspace{2mm} S^{\left<1\right>} \hspace{2mm} . . . \hspace{2mm} S^{\left<s\right>}\big)^\intercal,\\
\end{split}    
\end{equation}
\begin{equation*}
%\begin{split}
    %& 
    \text{where} \hspace{2mm} S^{\left<t\right>} := \big( S_{i+j}^{\left<t\right>} \big) _{i \in [\delta-1-\varepsilon]}^{j \in [\varepsilon+1]}, \hspace{2mm} \forall t \in [s+1] %\\
    %& 
    \text{ and} \hspace{2mm} S_{k}^{\left<t\right>} = \sum_{j=0}^{\ell-1} r_j(\alpha\xi^{a+kn_1+tn_2})v_j, \hspace{2mm} \forall k \in [\delta - 1], t \in [s+1]. %\\
%\end{split}    
\end{equation*}
That is,
\begin{equation}\label{S}
    S = 
    \begin{bmatrix}
    S_{0}^{\left<0\right>} & S_{1}^{\left<0\right>} & \ldots & S_{\varepsilon}^{\left<0\right>}\\
    S_{1}^{\left<0\right>} & S_{2}^{\left<0\right>} & \ldots & S_{\varepsilon+1}^{\left<0\right>}\\
    \vdots & \vdots & \hdots & \vdots\\
    S_{\delta-2-\varepsilon}^{\left<0\right>} & S_{\delta-1-\varepsilon}^{\left<0\right>} & \ldots & S_{\delta-2}^{\left<0\right>}\\
    \hdashline[2pt/2pt]
    S_{0}^{\left<1\right>} & S_{1}^{\left<1\right>} & \ldots & S_{\varepsilon}^{\left<1\right>}\\
    S_{1}^{\left<1\right>} & S_{2}^{\left<1\right>} & \ldots & S_{\varepsilon+1}^{\left<1\right>}\\
    \vdots & \vdots & \hdots & \vdots\\
    S_{\delta-2-\varepsilon}^{\left<1\right>} & S_{\delta-1-\varepsilon}^{\left<1\right>} & \ldots & S_{\delta-2}^{\left<1\right>}\\
    \hdashline[2pt/2pt]
    \vdots & \vdots & \hdots & \vdots\\
    \vdots & \vdots & \hdots & \vdots\\
    \hdashline[2pt/2pt]
    S_{0}^{\left<s\right>} & S_{1}^{\left<s\right>} & \ldots & S_{\varepsilon}^{\left<s\right>}\\
    S_{1}^{\left<s\right>} & S_{2}^{\left<s\right>} & \ldots & S_{\varepsilon+1}^{\left<s\right>}\\
    \vdots & \vdots & \hdots & \vdots\\
    S_{\delta-2-\varepsilon}^{\left<s\right>} & S_{\delta-1-\varepsilon}^{\left<s\right>} & \ldots & S_{\delta-2}^{\left<s\right>}\\
    
    \end{bmatrix}_{(\delta-1-\varepsilon)(s+1)\times(\varepsilon+1)}
\end{equation}

Let $S = XY\Tilde{X}$ be the decomposition of $S$. Here, the matrix $X$ is defined in equation \eqref{eq:S.decomposition.X},

\begin{equation}
\label{eq:S.decomposition.X}
\begin{split}
    X := &  \big( X^{\left<0\right>} \hspace{2mm} X^{\left<1\right>} \hspace{2mm} . . . \hspace{2mm} X^{\left<s\right>} \big)^\intercal, \\
\end{split}    
\end{equation}
where $X^{\left<t\right>} := \big( \xi^{(n_1i+n_2t)j} \big)_{i \in [\delta-1-\varepsilon]}^{j \in \{1,2,\ldots,\varepsilon\}}$ for all $t \in [s+1].$ That is,
\begin{equation}\label{X}
    X = 
    \begin{bmatrix}
    1 & 1 & \ldots & 1\\
    \xi^{n_1} & \xi^{2n_1} & \ldots & \xi^{\varepsilon n_1}\\
    \vdots & \vdots & \hdots & \vdots\\
    \xi^{(\delta-2-\varepsilon)n_1} & \xi^{2(\delta-2-\varepsilon)n_1} & \ldots & \xi^{\varepsilon(\delta-2-\varepsilon)n_1}\\
    \hdashline[2pt/2pt]
    \xi^{n_2} & \xi^{2n_2} & \ldots & \xi^{\varepsilon n_2}\\
    \xi^{(n_1+n_2)} & \xi^{2(n_1+n_2)} & \ldots & \xi^{\varepsilon(n_1+n_2)}\\
    \vdots & \vdots & \hdots & \vdots\\
    \xi^{\big((\delta-2-\varepsilon)n_1+n_2\big)} & \xi^{2\big((\delta-2-\varepsilon)n_1+n_2\big)} & \ldots & \xi^{\varepsilon\big((\delta-2-\varepsilon)n_1+n_2\big)}\\
    \hdashline[2pt/2pt]
    \vdots & \vdots & \hdots & \vdots\\
    \vdots & \vdots & \hdots & \vdots\\
    \hdashline[2pt/2pt]
    \xi^{sn_2} & \xi^{2sn_2} & \ldots & \xi^{\varepsilon sn_2}\\
    \xi^{(n_1+sn_2)} & \xi^{2(n_1+sn_2)} & \ldots & \xi^{\varepsilon(n_1+sn_2)}\\
    \vdots & \vdots & \hdots & \vdots\\
    \xi^{\big((\delta-2-\varepsilon)n_1+sn_2\big)} & \xi^{2\big((\delta-2-\varepsilon)n_1+sn_2\big)} & \ldots & 
    \xi^{\varepsilon\big((\delta-2-\varepsilon)n_1+sn_2\big)}\\
    
    \end{bmatrix}_{(\delta-1-\varepsilon)(s+1)\times\varepsilon}.
\end{equation}

The matrix $Y$ is defined in equation \eqref{eq:S.decomposition.Y},
\begin{equation}
\label{eq:S.decomposition.Y}
\begin{split}
    Y := &  diag\big( (\alpha\xi^a)^{i_1}E_{i_1}, (\alpha\xi^a)^{i_2}E_{i_2}, \ldots, (\alpha\xi^a)^{i_\varepsilon}E_{i_\varepsilon} \big), \\
\end{split}    
\end{equation}
where $E_{i_k}:= \sum_{j=0}^{\ell-1}e_{i_kj}v_j$ for all $i_k \in \mathcal{E}$. That is,

\begin{equation}\label{Y}
    Y = 
    \begin{bmatrix}
    (\alpha\xi^a)^{i_1}E_{i_1} & &  & \\
     & (\alpha\xi^a)^{i_2}E_{i_2} & & \\
     & & \ddots & \\
     & & & (\alpha\xi^a)^{i_\varepsilon}E_{i_\varepsilon}\\ 
    \end{bmatrix}_{\varepsilon \times\varepsilon}.
\end{equation}

The matrix $\Tilde{X}$ is defined in equation \eqref{eq:S.decomposition.tilde.X},

\begin{equation}
\label{eq:S.decomposition.tilde.X}
\Tilde{X}:= \big( \xi^{in_1j} \big) _{i \in \{1,2,\ldots,\varepsilon\}}^{j \in [\varepsilon+1]}. 
\end{equation}

That is, 
\begin{equation}\label{TildeX}
    \Tilde{X} = 
    \begin{bmatrix}
    1 & \xi^{n_1} & \xi^{2n_1} & \hdots & \xi^{\varepsilon n_1}\\
    1 & \xi^{2n_1} & \xi^{4n_1} & \hdots & \xi^{2\varepsilon n_1}\\
    \vdots & \vdots & \vdots & \hdots & \vdots\\
    1 & \xi^{\varepsilon n_1} & \xi^{2\varepsilon n_1} & \hdots & \xi^{\varepsilon^2 n_1}\\
    
    \end{bmatrix}_{\varepsilon \times(\varepsilon+1)}.
\end{equation}

%\begin{remark}\label{rem:multi.opn.}
    Consider two matrices $A = [a_{ij}]_{m \times n}$ and $B = [b_{ij}]_{m \times n}$. For a matrix operation, $*$ as defined in \cite{*multi.}, the matrix $A*B$ is defined as follows,
    \begin{equation} \label{eq:star.op}
        A*B =
        \begin{bmatrix}
            a_{11}b_{11} & a_{12}b_{12} & \hdots & a_{1n}b_{1n} \\
            a_{11}b_{21} & a_{12}b_{22} & \hdots & a_{1n}b_{2n} \\
            \vdots & \vdots & \ddots & \vdots \\
            a_{11}b_{m1} & a_{12}b_{m2} & \hdots & a_{1n}b_{mn} \\
            a_{21}b_{11} & a_{22}b_{12} & \hdots & a_{2n}b_{1n} \\
            \vdots & \vdots & \ddots & \vdots \\
            a_{21}b_{m1} & a_{22}b_{m2} & \hdots & a_{2n}b_{mn} \\
            \vdots & \vdots & \ddots & \vdots \\
            \vdots & \vdots & \ddots & \vdots \\
            a_{m1}b_{11} & a_{m2}b_{12} & \hdots & a_{mn}b_{1n} \\
            \vdots & \vdots & \ddots & \vdots \\
            a_{m1}b_{m1} & a_{m2}b_{m2} & \hdots & a_{mn}b_{mn} \\
        \end{bmatrix}.
    \end{equation}
%\end{remark}

Therefore, we can write the decomposition of matrix $X$ as $A*B$, where matrix $A$ defined in equation \eqref{A},

\begin{equation}\label{A}
%\begin{split}
    A := \big( \xi^{ij} \big) _{i \in [s+1]}^{j \in {1,2,\ldots,\varepsilon}} = \begin{bmatrix}
    1 & 1 & \hdots & 1\\
    \xi^{n_2} & \xi^{2n_2} & \hdots & \xi^{\varepsilon n_2}\\
    \vdots & \vdots & \hdots & \vdots\\
    \xi^{sn_2} & \xi^{2sn_2} & \hdots & \xi^{\varepsilon sn_2}\\
    \end{bmatrix}_{(s+1) \times\varepsilon},
%\end{split}
\end{equation}
and matrix $B$ defined in equation \eqref{B},
\begin{equation}\label{B}
%\begin{split}
    B := X^{\left<0\right>} =\begin{bmatrix}
    1 & 1 & \hdots & 1\\
    \xi^{n_1} & \xi^{2n_1} & \hdots & \xi^{\varepsilon n_1}\\
    \vdots & \vdots & \hdots & \vdots\\
    \xi^{(\delta-2-\varepsilon)n_1} & \xi^{2(\delta-2-\varepsilon)n_1} & \hdots & \xi^{\varepsilon (\delta-2-\varepsilon)n_1}\\
    \end{bmatrix}_{(\delta-1-\varepsilon)\times\varepsilon},
%\end{split}
\end{equation}

We now present a syndrome-based decoding procedure for quasi-twisted codes. This algorithm can correct up to $\varepsilon = \frac{d^*-1}{2}$ errors, where $d^* = min(\delta+s, d_\mathbb{C})$ as defined in theorem \ref{HT-like Bound-Th}. Table \ref{tab:dec.algo} contains a complete description of our decoding algorithm for an $[m \ell, k, d]_q$ $(\lambda,\ell)$-quasi-twisted code $\mathcal{C}$.

\begin{table}[h!]
    \centering
    \begin{tabular}{|l|}
    \hline
    Decoding algorithm for a $(\lambda,\ell)$-QT code\\ 
    \hline
    \hfill \\
    \textbf{Input:}\\ $\lambda,m,\ell,k,q,r \longleftarrow$ parameters of the quasi-twisted code $\mathcal{C}$\\
    $r(X) = \big(r_0(X), r_1(X), \ldots, r_{\ell-1}(X) \big) \in 
    \mathbb{F}_q[X]^\ell \longleftarrow$ received word\\
    $a \geq 0, \delta \geq 2, \{s, n_1,n_2\} \in \mathbb{Z}^+$ with $gcd(m,n_1)=1, gcd(m,n_2)<\delta$\\
    $\beta = \alpha\xi^{a+in_1+jn_2},$ $\forall i \in [\delta-1], j \in [s+1] \longleftarrow$ eigenvalues\\
    $\textbf{v} = (v_0, v_1, \ldots, v_{\ell-1}) \in \mathbb{F}_{q^r}^\ell \longleftarrow$ eigenvector\\
    \hfill \\
    
    \textbf{Output:}\\
    $c(X) = (c_0(X), c_1(X), \ldots, c_{\ell-1}(X)) \longleftarrow$ estimated codeword\\
    or\\
    DECODING FAILURE\\
    \hfill \\

    \textbf{Algorithm:}\\
    \tabitem Compute $S_t(X), \forall t \in [s+1]$ as in equation \eqref{S(X)}\\
    \tabitem Solve the Key Equations jointly by applying the generalized \\
    Berlekamp-Massey algorithm on $(X^{\delta-1}, S_0(X), S_1(X), \ldots, S_s(X))$ \\(refer remark \ref{Key-Eqns.})\\
    \tabitem Find all $i_k$ such that $\Lambda(\xi^{-i_k
    n_1})=0$ $\implies$ $\mathcal{E} = \{i_1,i_2,\ldots,i_\varepsilon\}$\\
    \tabitem If $\varepsilon < deg(\Lambda(X)):$\\
    \hspace{10mm} DECODING FAILURE $\longleftarrow$ \textbf{Output}\\
    \hspace{4mm} else:\\
    \hspace{10mm} Compute $E_{i_1},E_{i_2},\ldots,E_{i_\varepsilon} \in \mathbb{F}_{q^r}$ from one of the error evaluator \\
    \hspace{10mm} polynomials, $\Omega_t(X), t \in [s+1]$ \\
    \hspace{10mm} Compute $e_{{i_k},0}, e_{{i_k},1}, \ldots, e_{{i_k},{\ell-1}} \in \mathbb{F}_q$ s.t. $E_{i_k}:= \sum\limits_{j=0}^{\ell-1}e_{i_kj}v_j, \hspace{2mm} \forall i_k \in \mathcal{E}$\\
    \hspace{10mm} Compute $e_j(X) = \sum\limits_{i \in \mathcal{E}_j} e_{i,j}X^i, \hspace{2mm} \forall j \in [l]$\\
    \hspace{10mm} Compute $c_j(X) = r_j(X) - e_j(X), \hspace{2mm} \forall j \in [l]$\\
    \hspace{10mm} $c(X) = (c_0(X), c_1(X), \ldots, c_{\ell-1}(X)) \longleftarrow$ \textbf{Output}\\
    \hfill\\
    \hline
    \end{tabular}
    \caption{We present a step-by-step syndrome-based decoding algorithm for quasi-twisted codes that can correct up to $\frac{d^*-1}{2}$ errors. The first step in the algorithm computes $s+1$ syndrome polynomials. In the following step, the algorithm jointly solves the linear system of $s+1$ 'Key Equations' using a generalized Berlekamp-Massey algorithm. The third step involves solving for the roots of the error-locator polynomial $\Lambda(X)$ by using Chien search algorithm, which gives the positions of the burst errors, $i_k \in \mathcal{E}$. If the total number of errors, $\varepsilon$ is less than the degree of $\Lambda(X)$, then the algorithm fails. Otherwise, the algorithm computes the corresponding error values, $E_{i_k}$ using one of the error-evaluator polynomials $\Omega(X).$ Now, each of these $E_{i_k} \in \mathbb{F}_{q^r}$ corresponds to an $e_{i_kj} \in \mathbb{F}_q$, which gives the coefficients of the error polynomial $e(X)$. Finally, the algorithm reconstructs the codeword polynomial $c(X)$ from $e(X)$, which is the output of the algorithm.}
    \label{tab:dec.algo}
\end{table}

We now analyze the performance of our decoding algorithm. We begin by proving a condition that ensures that the decoding algorithm can correct up to $\varepsilon$ errors in theorem \ref{thm:rank.syndrome.matrix}. 

\begin{theorem}\label{thm:rank.syndrome.matrix}
    Consider an $[m\ell, k, d]_q$ $(\lambda,\ell)$-quasi-twisted code $\mathcal{C}$ such that condition \eqref{Th-eqn} together with the assumptions of theorem \ref{HT-like Bound-Th} hold. 
    Let  $\mathcal{E} = \{i_1, i_2, \ldots, i_\varepsilon\}$ be the set of error locations such that $\varepsilon$ is the maximum number of errors.
    Consider the $(s+1)$ syndrome polynomials $S_0(X), S_1(X),\ldots,S_s(X)$ as defined in equation \eqref{S(X)}. 
    Then, rank of the syndrome matrix defined in equation \eqref{S} is $rank(S) = \varepsilon.$
\end{theorem}

\begin{proof}
Let the syndrome matrix $S$ and the matrices $X, Y, \Tilde{X}$ in its decomposition, $S=XY\Tilde{X}$ be as defined in equations \eqref{S} to \eqref{X}. 
The matrix $\Tilde{X}$ is a Vandermonde matrix. Since $gcd(m,n_1)=1$, we have $rank(\Tilde{X}) = min(\varepsilon,\varepsilon+1) = \varepsilon$. The diagonal matrix, $Y$ is non-singular. Therefore, we have $rank(S) = rank(X)$.

Consider the decomposition $X = A*B$, where $A$ and $B$ are as defined in equations \eqref{A} and \eqref{B}. Using \cite[Section VI.A]{Gen.BM-Algo}, we see that if $rank(A) + rank(B) > \varepsilon$, then $rank(X) = \varepsilon$.  
Notice that both matrices $A$ and $B$ are Vandermonde. Given $gcd(m,n_1)=1$, we have $rank(A) = min((s+1),\varepsilon)$ and $rank(B) = min((\delta-1-\varepsilon),\varepsilon).$

Assume  $\delta-1>s$ (as shown in \cite{Gen.BM-Algo} for HT-bound). Recall that $\varepsilon \leq \frac{d^*-1}{2},$ where $d^* = min(\delta + s, d_\mathbb{C}).$ We get
\[\varepsilon \leq \frac{d^*-1}{2} = \frac{min(\delta + s, d_\mathbb{C})-1}{2} \leq \frac{\delta + s -1}{2} < \delta-1.\]
We can simplify $rank(A)+rank(B)$ in one of the four following ways.
\begin{itemize}[noitemsep]
    \item[i.] $(s+1) + (\delta-1-\varepsilon) = (\delta+s-1) +1-\varepsilon > 2\varepsilon + 1 -\varepsilon = \varepsilon +1 > \varepsilon,$\\
    \item[ii.] $(s+1) + \varepsilon > \varepsilon,$ \\
    \item[iii.] $\varepsilon + (\delta-1-\varepsilon) = \delta-1 > \varepsilon,$ \\
    \item[iv.] $\varepsilon + \varepsilon > \varepsilon$
\end{itemize}
Note that $rank(A) + rank(B) > \varepsilon$ in all four cases. Thus, we get $rank(X) = \varepsilon$, and hence $rank(S) = rank(X) = \varepsilon.$ 

\end{proof}

\begin{remark}
   The condition $rank(S) = \varepsilon$ guarantees that the error locator polynomial in equation \eqref{Lambda_poly} has unique roots $\xi^{-in_1}, \forall i \in \mathcal{E}$. Therefore, provided the number of errors is at most $\varepsilon$, our decoding algorithm as described in table \ref{tab:dec.algo} outputs a codeword of the chosen quasi-twisted code. 
\end{remark}

Now we analyze the time complexity of the algorithm to gauge its efficiency. The result is summarized in theorem \ref{thm:dec.alg.complexity}. 

\begin{theorem}[\bf Complexity analysis]\label{thm:dec.alg.complexity}
     Consider an $[n = m\ell, k, d]_q$ $(\lambda,\ell)$-quasi-twisted code $\mathcal{C}$. The syndrome-based decoding algorithm given in table \ref{tab:dec.algo} has $\mathcal{O}(n^2)$ time complexity.
\end{theorem}

\proof

The first line the algorithm computes syndrome polynomials. That is, it computes $\ell$ dot products and evaluates $\delta-1$ coefficients for $\ell$ polynomials each of length at most $m$. Therefore, the complexity of this step is $\mathcal{O}(n) = \mathcal{O}(m\ell)$. Given that we employ the generalized Berlekamp-Massey algorithm in step two, its complexity is $\mathcal{O}(\sqrt{n})$. In the next step, the algorithm computes roots of the error locator polynomial using Chien search. The complexity of step three is $\mathcal{O}(\varepsilon)$, that is $\mathcal{O}(1)$ with respect to $n$. Furthermore, computing the error polynomial and codeword polynomial has $\mathcal{O}(1)$ complexity with respect to $n$.

It is clear that the highest complexity corresponds to the second step of the algorithm (application of the generalized Berlekamp-Massey algorithm). Therefore, the overall complexity of our syndrome-based decoding algorithm is $\mathcal{O}(n^2).$ 

\qed

We end this section with an example of how our decoding algorithm works for the $[20,10,4]_3$ $(2,2)$-QT code.

\begin{example}
    Consider the $[10\cdot 2, 10, 4]_3$ $2$-quasi-twisted code $\mathcal{C}$ with its $2 \times 2$ generator matrix expressed in the reduced Gröbner basis form as follows.
    \begin{equation}
        \Tilde{G}(X) = 
        \begin{bmatrix}
            1 & 2X^9 + 2X^7 + 2X^6 + X^5 + 2X^3 + X^2 + 1\\
            0 & X^{10} + 1
        \end{bmatrix}
    \end{equation}
    Let $\mathbb{F}_3(a) := \mathbb{F}_{3^4}$ be the splitting field of $X^{10}+1.$ Notice that, we have $(m,\ell, \lambda) = (10,2,2).$ Let $\alpha$ and $\xi$ be the fixed $m^{th}$ roots of $\lambda$ and unity, respectively. We have $\alpha = a^4$ and $\xi = a^8.$ For subset, $\Omega = \{a^{52}, a^{60}, a^{68}\}$ of eigenvalues, we have $\beta = \alpha\xi^i$ for all $i \in D =\{6,7,8\}$. Consider the eigenvector $\mathbf{v} = (1 \hspace{3mm} a^{50}) \in \mathcal{V}$, where $\mathcal{V}$ is the intersection of eigenspaces corresponding to the eigenvalues in $\Omega$.

    In the case that an all-zero codeword is transmitted, and the received polynomial is $r(X) = (0, \hspace{2mm} X^8)$, we have
    \begin{equation}
        \begin{split}
            r_0(X) = & e_0(X) = 0 \text{ (the zero polynomial)} \\
            r_1(X) = & e_1(X) = X^8
        \end{split}
    \end{equation}
    Therefore, $\varepsilon = 1$. By solving the following system of equations, we can find the coefficient $\Lambda_1$ in the error-locator polynomial (as defined in equation \eqref{Lambda_poly}):
    \begin{equation}
        \begin{bmatrix}
            S_0^{\left<0\right>} \\
            S_1^{\left<0\right>} 
        \end{bmatrix}
        \begin{bmatrix}
            \Lambda_\varepsilon
        \end{bmatrix}
        =
        \begin{bmatrix}
            S_1^{\left<0\right>} \\
            S_2^{\left<0\right>} 
        \end{bmatrix}
    \end{equation}
    That is,
    \begin{equation}
        \begin{bmatrix}
            a^{66} \\
            a^{50}
        \end{bmatrix}
        \begin{bmatrix}
            \Lambda_1
        \end{bmatrix}
        =
        \begin{bmatrix}
            a^{50}\\
            a^{34}
        \end{bmatrix}
    \end{equation}
    Therefore, we get $\Lambda_1 = a^{64} = \xi^8$, which means the error-locator polynomial is given by, $\Lambda(X) = 1-X\xi^8.$ Upon error-evaluation, we get $E_8 = a^{66}$ in $\mathbb{F}_{81}$ and the error value is $e_{8,1} = 1$ in $\mathbb{F}_3.$ Therefore, our decoding algorithm corrects the received word generating the codeword polynomial corresponding to an all-zero codeword.
    
\end{example}

\section{Niederreiter-like code-based cryptosystem}\label{sec:Niederreiter cryptosystem}

In this section, we propose a Niederreiter-like code-based cryptosystem based on quasi-twisted codes and analyze its security in certain situations. Let $\ell,m, \lambda$ be positive integers and $m$ be a prime power of any prime other than the characteristic of $\mathbb{F}_q$, and integer $\ell \leq \mathcal{O}(poly)(m)$. Consider an $[n=m\ell, k=(\ell-1)m]_q$ $(\lambda,\ell)$-quasi-twisted code $\mathcal{C}$. Suppose $c = (c_0, c_1, c_2, \ldots, c_{n-1}) \in \mathcal{C}$ is a codeword of $\mathcal{C}$. Then, by definition \ref{def.QT}, we know that 
%\[ \text{for } c = \{a_0, a_1, a_2, \ldots, a_{n-1}\} \in \mathcal{C},\]
$(\lambda c_{n-\ell}, \lambda c_{n-\ell+1}, \ldots, \lambda c_{n-1}, c_0, \ldots, c_{n-\ell-1}) \in \mathcal{C}$ is also a codeword of $\mathcal{C}$. 

\begin{definition}(\textbf{Twistulant matrix})
    %An $m \times m$ twistulant matrix is defined as:
    Consider $c = (c_0, c_1, \hdots, c_{m-1}) \in \mathbb{F}_q^m.$ An $m \times m$ matrix is called a (right) twistulant matrix if its rows are composed of the (right) $\lambda$-constashifts of $c$. Any matrix $G$ is a twistulant matrix if $c$ forms the first row of $G$ and every other row of $G$ can be obtained by a $\lambda$-constashift of the row above it. That is, matrix $G$ is of the following form.
    \begin{equation}
        G =
        \begin{bmatrix}
        c_0 & c_1 & c_2 & \hdots & c_{m-1}\\
        \lambda c_{m-1} & c_0 & c_1 & \hdots & c_{m-2}\\
        \lambda c_{m-2} & \lambda c_{m-1} & c_0 & \hdots & c_{m-3}\\
        \vdots & \vdots & \vdots & \ddots & \vdots \\
        \lambda c_1 & \lambda c_2 & \lambda c_3 & \hdots & c_0
        \end{bmatrix}
    \end{equation}
\end{definition}

A twistulant matrix can be represented by its first row. Notice that this row can be mapped to the polynomial $c(X) = c_0 + c_1X + c_2X^2 + \hdots + c_{m-1}X^{m-1} \in \mathbb{F}_q[X]/\left< X^m - \lambda\right>$. This polynomial $c(X)$ is called the \textit{defining polynomial} of the twistulant matrix $G.$ When $\lambda = 1,$ a twistulant matrix is a circulant matrix.
 
It is proved in \cite{r-gen.QT} that a generator matrix of a $(\lambda,\ell)$-QT code can be expressed in the form of blocks of twistulant matrices. Each row of such a matrix is conventionally termed a \textit{generator}. Therefore, the generator matrix of a 1-generator QT-code can be expressed as,
\begin{equation}\label{1-gen.matrix}
    G = \Big[G_0| G_1| \hdots| G_{\ell-1}\Big],
\end{equation} 
where each $G_i$ is an $m \times m$ twistulant matrix.

Note that the dual code, $\mathcal{C}^\perp$ of the $[m\ell, (\ell-1)m]$ $(\lambda,\ell)$ 1-generator QT code, an $[m\ell,m]$ $(\lambda^{-1},\ell)$-QT code \cite{QT-codes}, is also a 1-generator code.

Thus, the generator matrix of $\mathcal{C}^\perp$ will be of the form given in equation \eqref{1-gen.matrix} with each $G_i$ as defined in equation \eqref{eq:G_i}.
\begin{equation}\label{eq:G_i}
    G_i =
    \begin{bmatrix}
    c_0 & c_1 & c_2 & \hdots & c_{m-1}\\
        \lambda^{-1} c_{m-1} & c_0 & c_1 & \hdots & c_{m-2}\\
        \lambda^{-1} c_{m-2} & \lambda^{-1} c_{m-1} & c_0 & \hdots & c_{m-3}\\
        \vdots & \vdots & \vdots & \ddots & \vdots \\
        \lambda^{-1} c_1 & \lambda^{-1} c_2 & \lambda^{-1} c_3 & \hdots & c_0
    \end{bmatrix}
\end{equation}
V. Bhargava et. al (\cite{G.std.form}) describe how a generator matrix of a rate $1/m$ code can be expressed in the standard form. Since $\mathcal{C}^\perp$ is a rate $1/m$ code, if one of the $G_i's$ is invertible (say, $G_0$), we can rewrite $G$ in the standard form, $G^*$.
\begin{equation}\label{H matrix}
    G^{*} = \Big[\mathcal{I}| G_1^{*}| G_2^*| \hdots| G_{\ell-1}^{*}\Big],
\end{equation}
where $\mathcal{I}$ is the $m \times m$ identity matrix and $G_i^{*} = G_i(G_0^{-1}) = (G_0^{-1})G_i,$ %\RK{write in inverse format. Right multiplication or left multiplication?} 
$\forall$ $1 \leq i \leq \ell-1$ (invertible twistulant matrices form an abelian group). The set of all right twistulant matrices is closed under multiplication (\cite{twistulant1}. Moreover, for an invertible $\lambda$-twistulant matrix, its inverse is also a $\lambda$-twistulant matrix (\cite{twistulant2}). Using these two facts, we can clearly see that $G_i^*$ ($\forall$ $1 \leq i \leq \ell-1$) is also a $\lambda$-twistulant matrix. Since $G^{*}$ forms a generator matrix of $\mathcal{C}^\perp$, 
we can consider $H = G^{*}$ to be a parity check matrix of $\mathcal{C}$, the quasi-twisted code of our interest.

\subsection{Proposed cryptosystem}\label{subsec:cryptosystem.desc}

In this section, we describe the proposed Niederreiter-like cryptosystem based on quasi-twisted codes.

Our cryptosystem differs from the Niederreiter cryptosystem in the key generation step. We choose an $[n=m\ell, k=(\ell-1)m]_q$ $(\lambda,\ell)$-quasi-twisted code $\mathcal{C}$, where $\ell,m, \lambda$ are positive integers such that $m$ is a prime power for some prime other than the characteristic of $\mathbb{F}_q$. Let $\ell$ be bounded above by a polynomial in $m$. The parity check matrix, $H$ of this code $\mathcal{C}$ is chosen such that $H = G^*$ as defined in equation \eqref{H matrix}. The encryption and decryption steps are same as described in figure \ref{fig:Niederreiter}.

\subsection{Classical security}

We now analyze the classical security of our cryptosystem. The equivalence of McEliece and Niederreiter cryptosystems from a security perspective is proved in \cite{McE=Nied}. Two important classical attacks against McEliece or Niederreiter cryptosystems are considered in this section.
\begin{enumerate}
    \item Information Set Decoding (ISD) attacks:
    ISD attack is one of the best-known generic attacks against code-based cryptosystems, with least complexity. The attack relies on a random selection of linearly independent columns of the public key, $H'$, termed as \textit{'information sets'} under the assumption that these do not contain errors. The computational complexity of ISD attacks can be used to determine the minimum public key size that is required to attain a given security level of the cryptosystem.
    
    These can be induced by two strategies; Lee and Brickell's method and Stern's algorithm. The former uses a probabilistic approach to find information sets containing fewer errors. The latter employs a meet-in-the-middle technique to reduce the search space by splitting error vectors into partial sums. Furthermore, these attacks have a minimum work factor, which represents the expected number of elementary operations required to successfully decode a random linear code (\cite{work.factor}). Therefore, the minimum work factor can be considered as the security level of these cryptosystems. 
    Each iteration of an ISD algorithm requires inverting submatrices corresponding to chosen information sets. The work factor (complexity) for the matrix inversion is $O(k^\gamma)$, where $ 2 < \gamma < 3,$ and $k$ is the size of the information set. 

Now we analyze the minimum work factor for our proposed cryptosystem based on an  $[m\ell, (\ell-1)m]$ $(\lambda,\ell)$-QT code. Let the error-correcting capacity of this code be $\varepsilon$. To compute work factor for matrix inversion, we use the elementary algorithm in \cite{alpha-work.factor}, which is $\alpha (n-k)^3 = \alpha m^3$ for a small constant $\alpha.$ We replace $k$ in the original equation with $n-k$ since we use a variant of the Niederreiter cryptosystem that uses a parity check matrix in place of a generator matrix. 

 The probability that there is no error in randomly selected $n-k =m$ columns (information sets) from n columns (with $\varepsilon$ errors) of $H'$ is $\binom{n-\varepsilon}{n-k}/\binom{n}{n-k} = \binom{m\ell-\varepsilon}{m}/\binom{m\ell}{m}$ (\cite{Lee.Brickell}). Thus, the total work factor for one iteration of an ISD attack is given by,
\begin{equation}
    W = \frac{\alpha m^3 \binom{m\ell}{m}}{\binom{m\ell-\varepsilon}{m}}.
\end{equation}
Optimization parameters for our analysis are chosen based on Lee and Brickwell's framework (\cite{Lee.Brickell}). Therefore, the minimum work factor is given by,
\begin{equation}
    W_{min} = T_2(\alpha m^3 + N_2 \beta m) = T_2(m^3 + N_2 m),
\end{equation}
\[\text{where, }T_2 = \frac{1}{\sum\limits_{i=0}^{2} Q_i }, 
Q_i = \frac{{\binom{\varepsilon}{i}}\binom{m\ell-2}{m-i}}{\binom{n}{m}}
\text{ and } N_2 = \sum_{i=0}^{2}\binom{m}{i}.\]

Setting $\alpha=\beta=1$
, we get,
\begin{equation}
    W_{min} = T_2\big(m^3 + mN_2 \big)
\end{equation}

We can now use this minimum work factor to determine minimum size of the public key required to compute security level of our cryptosystem. Due to lack of literature on exact constructions of quasi-twisted codes in the form required by our construction, we are unable to present a numerical analysis of this parameter for our cryptosystem.

%This is because the minimum work factor is indicative of the lower bound on computational effort required by an adversary. 

\item Attack on dual code: An attack on the dual code is another class of significant attacks on code-based cryptosystems. These attacks exploit the structural weaknesses the underlying code. A cryptosystem is prone to such an attack when the parity check matrix is sparse, resulting in a dual code with low-weight codewords. When there are many low-weight codewords in the dual code, an attacker can exploit patterns to recover information about the underlying code structure, and in turn the secret key. A dual code attack proceeds by analyzing the weight distribution of the dual code so as to understand linear dependencies in the parity check matrix. This poses a security threat to the cryptosystem. Such an attack can be prevented by avoiding sparse parity check matrices. The choice of parameters for our cryptosystem ensures that the chosen parity check matrix is not sparse. Therefore, our cryptosystem is secure against attacks on the dual code. 

\end{enumerate}

\subsection{Quantum security}
We now analyze security of our cryptosystem against some quantum attacks. Quantum Fourier Sampling (QFS) plays a central role in many quantum algorithms, including Shor’s algorithm. A detailed background can be found in \cite{QFS.attacks}. We show that our cryptosystem can withstand attacks based on Quantum Fourier Sampling (QFS).

\subsubsection{Hidden subgroup problem}
One way of attacking the Niederreiter cryptosystem is a scrambler pemutation attack, which can be reduced into an instance of the hidden subgroup problem. QFS can be used to solve the hidden subgroup problem. We start with defining these problems.

\begin{definition}{(\bf Scrambler permutation attack)}
    This attack relies on finding the scrambler permutation pair, that is a matrices $S$ and $P$, assuming that $H$ and $H'$ are known. Any $S', P'$ satisfying $H' = S'HP'$ is enough to make the attack successful.
\end{definition}

\begin{definition}{(\bf Hidden shift problem)}
    Consider a finite group, $G$ and a finite set, $\Sigma$. Define two functions, $f_0, f_1:G \longrightarrow \Sigma$. The problem is to find a constant $g \in G$ such that $f_1(x) = f_0(gx)$ for all $x \in G$, provided such a constant (termed as `shift' from $f_0$ to $f_1$) exists. 
    There might exist more than one such shifts; finding any one of these is enough.
\end{definition}

\begin{definition}{(\bf Hidden subgroup problem)}
   Let $f$ be a function on a group $G$ such that $f(x_1) = f(x_2)$ if and only if %\iff 
    $x_1H = x_2H$, for some unknown subgroup $H< G$. The problem is to find a set of generators for the subgroup $H$.
\end{definition}

The scrambler permutation problem reduces to the hidden shift problem when the group is taken to be $G = GL_{n-k}(\mathbb{F}_q)\times S_n$ and the functions are defined on this group such that for all $(S, P) \in GL_{n-k}(\mathbb{F}_q)\times S_n,$
\begin{equation}
    f_0(S,P) = S^{-1}HP, \hspace{5mm} f_1(S,P) = S^{-1}H'P.
\end{equation}
Here, the permutation matrix $P$ is mapped to the corresponding permutation in $S_n$. Note that $H' = SHP$ if nd only if $(S^{-1},P)$ is a shift from $f_0$ to $f_1.$ That is, for all $(S,P) \in GL_{n-k}(\mathbb{F}_q)\times S_n$,
\begin{equation}
    f_0((S^{-1},P)(S,P)) = S^{-1}(SHP)P = S^{-1}H'P = f_1(S,P).
\end{equation}

Let $G \wr \mathbb{Z}_2$  represent the wreath product and $G^2 \rtimes \mathbb{Z}_2$ represent the semi-direct product for a group $G$ and set $\mathbb{Z}_2 := \{0,1\}$. The hidden shift problem on a group $G$ reduces to the hidden subgroup problem on $G \wr \mathbb{Z}_2 = G^2 \rtimes \mathbb{Z}_2.$ Consider two functions $f_0$ and $f_1$ defined on $G$ and define a function $f: G \wr \mathbb{Z}_2 \longrightarrow \Sigma\times\Sigma.$ For $(x_1,x_2) \in G^2$ and $a \in \mathbb{Z}_2,$
\begin{equation}
    f\big((x_1,x_2),a\big) :=
    \begin{cases}
        \big(f_0(x_1), f_1(x_2)\big) & \text{if } a =0 \\
        \big(f_1(x_2), f_0(x_1)\big) & \text{if } a = 1 \\
    \end{cases}
\end{equation}

The hidden shift problem for $G = GL_{n-k}(\mathbb{F}_q)\times S_n$ and shift from $f_0$ to $f_1$ to be $s$, reduces to the hidden subgroup problem on $G^2 \rtimes \mathbb{Z}_2 = (GL_{n-k}(\mathbb{F}_q)\times S_n)^2 \rtimes \mathbb{Z}_2.$ If $H_0 := G|_{f_0},$ then the hidden subgroup is given by
\begin{equation}
    K := G \wr \mathbb{Z}_2|_f = \Bigg(\bigg(\Big(H_0, s^{-1}H_0s\Big),0\bigg) \bigcup \bigg(\Big(H_0s, s^{-1}H_0\Big),1\bigg)\Bigg)
\end{equation}

Finding this hidden subgroup $K = G \wr \mathbb{Z}_2|_f$ enables in finding a shift from $f_0$ to $f_1.$ That is, if $\big((g_1,g_2),1\big) \in K,$ then there exists $g_1 \in H_0s$ such that $g_1$ is a shift from $f_0$ to $f_1.$ We can verify by computing $s = (S^{-1},P).$ 
Consider
\begin{equation}
    H_0 := G|_{f_0} = \{(S,P) \in GL_{n-k}(\mathbb{F}_q)\times S_n : S^{-1}HP = H \}
\end{equation}
Let $(\mathscr{S}, \mathscr{P}) \in H_0.$ Then, by definition, we have 
\[\mathscr{S}^{-1}H\mathscr{P} = H\].
So, 
\begin{equation}
    (\mathscr{S}, \mathscr{P})(S^{-1}P) = (\mathscr{S}S^{-1}, \mathscr{P}P) \in H_0s
\end{equation}
Therefore,
\begin{equation}
    f_0 \Big( \big( \mathscr{S}S^{-1}, \mathscr{P}P\big)\big(S,P\big)\Big) = S^{-1}S\mathscr{S}^{-1}H\mathscr{P}PP = S^{-1}SHPP = S^{-1}H'P = f_1(S,P)
\end{equation}
This proves that $(\mathscr{S}S^{-1}, \mathscr{P}P) \in H_0S$ is a shift.

Solving the hidden subgroup problem on $(GL_{n-k}(\mathbb{F}_q) \times S_n)^2 \rtimes \mathbb{Z}_2$ for hidden subgroup, $K$ is same as solving the hidden shift problem over $GL_{n-k}(\mathbb{F}_q)\times S_n$. This, in turn, guarantees a solution to the scrambler permutation problem, which is a basis for the scrambler permutation attack. Therefore, we focus on difficulty of solving the hidden subgroup problem on $(GL_{n-k}(\mathbb{F}_q) \times S_n)^2 \rtimes \mathbb{Z}_2$ to prove resistance of our cryptosystem to a scrambler permutation attack.

\subsubsection{Indistinguishability by QFS}
 
Let $G$ be a finite group. Consider QFS over $G$ in basis, $\{B_\rho\}$. 
Two subgroups, $H_1$ and $H_2$, are said to be indistinguishable if their probability distributions, $P_{H_1}$ and $P_{H_2}$, have their total variations really close to one another. In this section, we describe a sufficient condition for indistinguishability of  subgroup, $K$ using some of the results in \cite{QFS.attacks}. We check indistinguishability of a hidden subgroup $H < G$ from its conjugate subgroups $gHg^{-1}$ or the trivial subgroup $\left<e\right>$, where $e$ is the identity element in $G$. Weak Fourier sampling cannot distinguish between conjugate subgroups since $P_{gHg^{-1}}$ does not depend on $g$, which implies that conjugate subgroups have the same probability distribution.
 
 Let $\hat{G}$ be the set of irreducible unitary representations of $G$ and $\rho \in \hat{G}$ be an irreducible representation given by weak Fourier sampling. For some non-trivial subgroup $H$, we show that strong Fourier sampling cannot efficiently distinguish between $H$ and its conjugates or from the trivial subgroup. Using strong Fourier sampling, we show that the probability distribution of the trivial hidden subgroup, $P_{\left< e \right>} (\cdot|\rho)$ is same as the uniform distribution $U_{B_\rho}$ on the basis $B_\rho$. Thus, it is enough to show that for a random $g \in G$, $P_{gHg^{-1}}(\cdot|\rho)$ is close to $U_{B_\rho}$ in total variation.

 We now restate the formal definition of distinguishability of a subgroup using strong QFS as given in \cite[Defn. 5]{QFS.attacks}.

\begin{definition}{(\bf Distinguishability of a subgroup using strong QFS)}\label{D_H defn}
    Let $\mathscr{D}_H$ denote the distinguishability of a subgroup $H < G$ using strong Fourier sampling over $G$. Then, $\mathscr{D}_H$ is defined to be the expectation of the squared $\ell_1$-distance between $P_{gHg^{-1}}(\cdot|\rho)$ and $U_{B_\rho}.$ That is,
    \begin{equation}
        \mathscr{D}_H:= \mathbb{E}_{\rho,g}\Big[\left\|P_{gHg^{-1}}(\cdot|\rho) - U_{B_\rho}\right\|_1^2\Big]
    \end{equation}
    for $\rho \in \hat{G}$ and a random $g \in G.$ The subgroup $H$ is said to be indistinguishable if,
    \begin{equation} \label{indist.cond.}
        \mathscr{D}_H \leq \log^{-\omega(1)}|G|
    \end{equation}
\end{definition}
\vspace{-2mm}
Using Markov's inequality, if strong Fourier sampling cannot distinguish the subgroup $H$, then for all constant $c>0,$
\begin{equation}
    \left\|P_{gHg^{-1}}(\cdot|\rho) - U_{B_\rho}\right\|_{t.v.} < \log^{-c}|G|
\end{equation}
with a minimum probability of $1-\log^{-c}|G|$ in both $g$ and $\rho.$

\subsubsection{Security of our cryptosystem against QFS based attacks }\label{subsec:Nied-like.crypto}

Recall that we consider a Niederreiter-like cryptosystem where the underlying code is an $[n=m\ell, k=(\ell-1)m]$ $(\lambda,\ell)$-quasi-twisted code $\mathcal{C}$ with parity check matrix, $ H = \Big[\mathcal{I}|     G_1^{*}| G_2^{*}| \hdots| G_{\ell-1}^{*}\Big]_{m \times m\ell}$ as defined in equation \eqref{H matrix}. Here, $\mathcal{I}$ is the $m \times m$ identity matrix and for all $1 \leq i \leq \ell-1$, $G_i^{*}$ is a $\lambda^{-1}$-twistulant matrix with entries in $\mathbb{F}_q$. That is, 
%$G_i^{*} = G_i/G_0,$ $\forall$ $1 \leq i \leq \ell-1$, with
\begin{equation}
    G_i^* =
    \begin{bmatrix}
    c_0 & c_1 & c_2 & \hdots & c_{m-1}\\
    \lambda^{-1} c_{m-1} & c_0 & c_1 & \hdots & c_{m-2}\\
    \lambda^{-1} c_{m-2} & \lambda^{-1} c_{m-1} & c_0 & \hdots & c_{m-3}\\
    \vdots & \vdots & \vdots & \ddots & \vdots \\
    \lambda^{-1} c_1 & \lambda^{-1} c_2 & \lambda^{-1} c_3 & \hdots & c_0
    \end{bmatrix}_{m \times m}
\end{equation}
Note that $n-k = m$ for the $[m\ell,(\ell-1)m]_q$ code under consideration. 

\begin{remark}\label{rem.twistulant}
    We impose following conditions on the parity check matrix $ H = \Big[\mathcal{I}|     G_1^{*}| G_2^{*}| \hdots| G_{\ell-1}^{*}\Big]$.
    \begin{enumerate}[topsep = 0pt]
        \item No two twistulant matrices $G_i^*$ and $G_j^*,$ for $i \neq j, 1 \leq i,j \leq \ell-1,$ can have the same defining polynomial, $c(X) = c_0 + c_1X + c_2X^2 + \hdots + c_{m-1}X^{m-1} \in \mathbb{F}_q[X]/\left< X^m - \lambda\right>$. \\
        \item If $\lambda = 1$ (or equivalently, if $q =2$), at least one of the coefficients of the defining polynomial, $c(X),$ should be distinct from the rest of the coefficients of $c(X)$. In other words, there exists  $i,$ $1 \leq i \leq m-1$ such that $c_i \neq c_j$ for all $j \neq i,$ $1 \leq j \leq m-1.$
        %If $\lambda = 1$ or $q =2,$ then the non-zero co-efficients in the defining polynomial of $G_i^*,$ for all $1 \leq i \leq \ell-1,$ must be distinct.
    \end{enumerate}
\end{remark}

Recall that the scrambler permutation attack on a Niederreiter cryptosystem is same as solving the hidden subgroup problem on $(GL_{m}(\mathbb{F}_q)\times S_n)^2 \rtimes \mathbb{Z}_2$ with hidden subgroup, $K$. For a hidden element $s \in GL_{m}(\mathbb{F}_q)\times S_n$, we have
\begin{equation}\label{subgroupK}
    K = G \wr \mathbb{Z}_2|_f = \Bigg(\bigg(\Big(H_0, s^{-1}H_0s\Big),0\bigg) \bigcup \bigg(\Big(H_0s, s^{-1}H_0\Big),1\bigg)\Bigg),
\end{equation}
where
\begin{equation}
    H_0 := G|_{f_0} = \{(S,P) \in GL_{m}(\mathbb{F}_q)\times S_n : S^{-1}HP = H\}.
\end{equation}
%Let us consider $q=p$ for some prime number $p.$ 
The \textbf{automorphism group},$(Aut(H))$ of a linear code generated by $H$ as the projection of $H_0$ onto $S_n.$ That is,
\begin{equation}
    Aut(H) = \{P \in S_n: S^{-1}HP = H \text{ for some }S \in GL_{m}(\mathbb{F}_p)\}.
\end{equation}
Therefore, for all $(S,P) \in H_0$, there exists $P \in Aut(H)$.

\begin{definition}{(\bf Minimal degree)}\label{min.deg.}
    The minimal degree of a permutation group is the number of points that are not fixed by a non-trivial element of the group.
\end{definition}

For a McEliece cryptosystem, H. Dinh et al. (\cite[Thm. 4]{QFS.attacks}) give a bound on $\mathscr{D}_K$ in terms of the minimal degree of $Aut(M),$ where $M$ is the $k \times n $ generator matrix of the underlying code. We can rewrite \cite[Thm. 4]{QFS.attacks} for a Niederreiter cryptosystems in terms of the minimal degree of $Aut(H)$, where $H$ is an $(n-k) \times n$ parity check matrix of the underlying code. 
\begin{theorem}\label{thm.indist.}
    Assume $q^{{(n-k)}^2} \leq n^{an}$ for some constant $0 < a < 1/4$. Let $d$ be the minimal degree of an automorphism group, $Aut(H)$. For a sufficiently large $n$ and a constant $\delta>0$, $\mathscr{D}_K \leq O(|K|^2e^{-\delta d})$ for subgroup, $K$.
\end{theorem}

% The change in dimension of $H$ (which is $(n-k) \times n$) from that of $M$ $(k \times n)$ is accounted for in the assumptions involved. That means, for some constant $0 < a < 1/4$, instead of the original assumption of $q^{n^2} \leq n^{an},$ we have $q^{{(n-k)}^2} \leq n^{an}$ for our case.

\begin{remark}\label{rem.indist.}
    Recall equation \eqref{indist.cond.}. The subgroup, $K$ is indistinguishable if $\mathscr{D}_K \leq \log^{-\omega(1)}|G|.$ Using the assumption $q^{{(n-k)}^2} \leq n^{an}$, we can simplify $\log|G|$ as $\log|(GL_{n-k}(\mathbb{F}_q)\times S_n)^2 \rtimes \mathbb{Z}_2| = O(\log n! + \log q^{{(n-k)}^2}) = O(n\log n).$ 
    This implies that the subgroup, $K$ is indistinguishable if $|K|^2e^{-\delta d} \leq (n\log n)^{-\omega(1)}$.
\end{remark}

The size of subgroup, $K,$ is $|K| = 2|H_0|^2$ and $|H_0| = |Aut(H)|\times|Fix(H)|,$ where $Fix(H) := \{S \in GL_{n-k}(\mathbb{F}_q): SH=H\}.$ Clearly, $|Fix(H)| = 1 \text{ or } |H_0| = |Aut(H)|.$  Therefore in order to check for indistinguishability of a hidden subgroup, $K$, we need to find the size and minimal degree of the automorphism group, $Aut(H).$ 

Recall that our parity check matrix $H$ is an $m \times m\ell$ matrix. Consider $P \in Aut(H).$ By definition, there exists a matrix, $S \in GL_m(\mathbb{F}_q)$ such that, \\
\[S^{-1}HP = H.\]
Let $H = \left[\mathcal{I}| C\right]$ such that $C = \left[G_1^{*}| G_2^{*}| \hdots| G_{\ell-1}^{*}\right].$ Therefore,
\begin{equation}\label{shp-eqn}
    S^{-1}\left[\mathcal{I}| C\right]P = \left[S^{-1}| S^{-1}C\right]P = \left[\mathcal{I}| C\right].
\end{equation}

Right multiplication by the permutation matrix, $P$ permutes the columns of a matrix, $S^{-1}H$. We get matrix, $H$. Notice that, the entries in matrix $S$ are in $\mathbb{F}_p$ and entries in matrix $C$ are in $\mathbb{F}_q,$ where $q=p^s$ for some integer $s> 0.$ Therefore, matrices $S^{-1}C$ and $C$ cannot have the same set of columns. In other words, $P$ permutes the columns of $S^{-1}$ such that we get matrix $I$. Matrix $P$ permutes columns $S^{-1}C$ such that we get $C.$ This implies that every permutation matrix $P \in Aut(H)$ is a block diagonal matrix as defined in equation \eqref{eq:perm.matrix}. The first block, $P_0$ is an $m \times m$ matrix. Given that matrix $C$ is a twistulant matrix, for all $1 \leq i \leq \ell-1$ block matrices $P_i$ are of size $m \times m$. That is,
\begin{equation}\label{eq:perm.matrix}
    P =
\begin{pmatrix}
    P_0 & & & & \\
    & P_1 & & \smash{\text{\huge 0}} & \\
    & & \ddots & & \\
    & & & P_{l-2} & \\
    & \smash{\text{\huge 0}} & & & P_{\ell-1} \\
\end{pmatrix}_{m\ell \times m\ell}
\end{equation}
where each block $P_i$ acts on $S^{-1}G_i^*,$ for all $1 \leq i \leq \ell-1.$ We can rewrite equation \eqref{shp-eqn} as follows,
\begin{equation}
    \begin{split}
        \left[S^{-1}| S^{-1}C\right]P = & \left[S^{-1}P_0| S^{-1}G_1^{*}P_1 | S^{-1}G_2^{*}P_2 | \hdots| S^{-1}G_{\ell-1}^{*}P_{\ell-1}\right] \\
        = & \left[I|G_1^{*}| G_2^{*}| \hdots| G_{\ell-1}^{*}\right]
    \end{split}
\end{equation}
Observe that $S^{-1}P_0 = I$, that is  $S^{-1} = P_0^{-1}$. Thus, $P_0^{-1}G_i^{*}P_i = G_i^{*}$ for all $1 \leq i \leq \ell-1.$
Recall that $Aut(H) = \{P \in S_n: S^{-1}HP = H \text{ for some }S \in GL_{m}(\mathbb{F}_p)\}.$ Then the size of $Aut(H)$ can rewritten as the cardinality of the following set
\begin{equation}\label{eq:A1}
    A_1 := \left\{(P_0, P_i) | P_0^{-1}G_i^{*}P_i = G_i^{*} \text{ for some } 1 \leq i \leq \ell-1 \right\}.
\end{equation}
The second condition in remark \ref{rem.twistulant} implies that no two columns in any $G_i^*$ can be identical. Together with the first condition in remark \ref{rem.twistulant}, we have that no two columns of $H$ are identical. Moreover, $S^{-1}H = P_0^{-1}H$ cannot have any identical columns. From equation \eqref{eq:A1} we have that for all $1 \leq i \leq \ell-1$, $P_0^{-1}G_i^*$ has distinct columns. Then there exists at most one $P_i$ such that we obtain $G_i^*$ by permuting the columns of $P_0^{-1}G_i^*$ using $P_i$. The cardinality of the set $A_1$ is the number of such $P_0$'s. That is, if
\begin{equation}
    A_2 := \Big\{P_0 \in S_m | P_0^{-1}G_i^{*}P_i = G_i^{*} \text{ for all } 1 \leq i \leq \ell-1 \Big\} 
\end{equation}
then, $|Aut(H)| = |A_1| = |A_2|.$ Moreover, $Aut(H)$ is a subgroup of $S_n$ and $A_2$ is a subgroup of $S_m$.

Let $GL_{n}(\mathbb{F}_q)$ be the general linear group over $\mathbb{F}_q$ (a group of $n \times n$ invertible matrices), $AFL_1(p)$ be the affine group of degree one over field, $\mathbb{F}_p$, where $p$ is a prime number. The Burnside-Schur theorem \cite[Thm. 3.5B]{Burnside.prime} (restated below) is useful in finding the size and minimal degree of $Aut(H).$
\begin{theorem}{\bf (Burnside-Schur)} 
    Let $p$ be a prime. Every primitive finite permutation group of size, $p$ containing a regular cyclic subgroup is either 2-transitive or permutationally isomorphic to a subgroup of the affine group $AGL_1(p)$. 
\end{theorem}

The result was extended to finite permutation groups of size, $p^s$ for integer $s>0$ (\cite[pp. 339-343]{Burnside}). Given $A_2$ is a subgroup of $S_m,$ where  $m$ is a prime power for some prime other than the characteristic of $\mathbb{F}_q$, if $A_2$ contains an $m$-cycle, then we can apply the Burnside-Schur theorem on $A_2.$ Consider the following matrix,
\begin{equation}
    B =
    \begin{bmatrix}
    0 & 0 & \hdots & 0 & \lambda \\
    1 & 0 & \hdots & 0 & 0 \\
    0 & 1 & \hdots & 0 & 0 \\
    \vdots & \vdots & \ddots & \vdots & \vdots \\
    0 & 0 & \hdots & 1 & 0 \\
    \end{bmatrix}_{m \times m}.
\end{equation}
Matrix $B$ corresponds to an $m$-cycle. We check if $B$ is an element of set $A_2.$ If $B \in A_2,$ then for all $1 \leq i \leq \ell-1$, we have $B^{-1}G_i^{*}P_i = G_i^{*}$. Consider the matrix $B^{-1}$,
\begin{equation}
    B^{-1} =
    \begin{bmatrix}
    0 & 1 & 0 & 0 & \hdots & 0 \\
    0 & 0 & 1 & 0 & \hdots & 0 \\
    \vdots & \vdots & & \ddots & \vdots & \vdots \\
    0 & 0 & 0 & 0 & \hdots & 1 \\
    \lambda^{-1} & 0 & 0 & 0 & \hdots & 0 \\
    \end{bmatrix}_{m \times m}.
\end{equation}
Then, for all $1 \leq i \leq \ell-1$,
\begin{equation}
\begin{aligned}
    B^{-1}G_i^{*} = &
    \begin{bmatrix}
    0 & 1 & 0 & 0 & \hdots & 0 \\
    0 & 0 & 1 & 0 & \hdots & 0 \\
    \vdots & \vdots & & \ddots & \vdots & \vdots \\
    0 & 0 & 0 & 0 & \hdots & 1 \\
    \lambda^{-1} & 0 & 0 & 0 & \hdots & 0 \\
    \end{bmatrix}
    \begin{bmatrix}
    c_0 & c_1 & c_2 & \hdots & c_{m-1}\\
    \lambda^{-1} c_{m-1} & c_0 & c_1 & \hdots & c_{m-2}\\
    \lambda^{-1} c_{m-2} & \lambda^{-1} c_{m-1} & c_0 & \hdots & c_{m-3}\\
    \vdots & \vdots & \vdots & \ddots & \vdots \\
    \lambda^{-1} c_1 & \lambda^{-1} c_2 & \lambda^{-1} c_3 & \hdots & c_0
    \end{bmatrix} \\
    = &
    \begin{bmatrix}
    \lambda^{-1} c_{m-1} & c_0 & c_1 & \hdots & c_{m-2}\\
    \lambda^{-1} c_{m-2} & \lambda^{-1} c_{m-1} & c_0 & \hdots & c_{m-3}\\
    \vdots & \vdots & \vdots & \ddots & \vdots \\
    \lambda^{-1} c_1 & \lambda^{-1} c_2 & \lambda^{-1} c_3 & \hdots & c_0 \\
    \lambda^{-1}c_0 & \lambda^{-1}c_1 & \lambda^{-1}c_2 & \hdots & \lambda^{-1}c_{m-1}
    \end{bmatrix}. \\
\end{aligned}
\end{equation}
For $P_i = B$, we show that $B^{-1}G_i^{*}P_i = G_i^{*}$.
\begin{equation}
\begin{aligned}
    B^{-1}G_i^{*}B = &
    \begin{bmatrix}
    \lambda^{-1} c_{m-1} & c_0 & c_1 & \hdots & c_{m-2}\\
    \lambda^{-1} c_{m-2} & \lambda^{-1} c_{m-1} & c_0 & \hdots & c_{m-3}\\
    \vdots & \vdots & \vdots & \ddots & \vdots \\
    \lambda^{-1} c_1 & \lambda^{-1} c_2 & \lambda^{-1} c_3 & \hdots & c_0 \\
    \lambda^{-1}c_0 & \lambda^{-1}c_1 & \lambda^{-1}c_2 & \hdots & \lambda^{-1}c_{m-1}
    \end{bmatrix} 
    \begin{bmatrix}
    0 & 0 & \hdots & 0 & \lambda \\
    1 & 0 & \hdots & 0 & 0 \\
    0 & 1 & \hdots & 0 & 0 \\
    \vdots & \vdots & \ddots & \vdots & \vdots \\
    0 & 0 & \hdots & 1 & 0 \\
    \end{bmatrix} \\
    = &
    \begin{bmatrix}
    c_0 & c_1 & c_2 & \hdots & c_{m-1}\\
    \lambda^{-1} c_{m-1} & c_0 & c_1 & \hdots & c_{m-2}\\
    \lambda^{-1} c_{m-2} & \lambda^{-1} c_{m-1} & c_0 & \hdots & c_{m-3}\\
    \vdots & \vdots & \vdots & \ddots & \vdots \\
    \lambda^{-1} c_1 & \lambda^{-1} c_2 & \lambda^{-1} c_3 & \hdots & c_0  
    \end{bmatrix} 
    = G_i^{*}\\
\end{aligned}
\end{equation}
Therefore, $B \in A_2$. Applying the Burnside-Schur theorem to $A_2,$ we can conclude that $A_2$ is either 2-transitive or is permutationally isomorphic to $AGL_1(m).$ Due to the chosen form of parity check matrix $H = \left[\mathcal{I}|c\right]$, $A_2$ cannot be 2-transitive. Therefore, $A_2 \leq AGL_1(m)$ and size of $A_2$ is bounded above by the size of $AGL_1(m)$, that is $|A_2| \leq m(m-1)$. Therefore, $|Aut(H)| \leq m(m-1).$

Given that $B \in A_2$ and $A_2 \leq AGL_1(m)$, $B \in AGL_1(m)$. For some $A \in M_{m \times m}(\mathbb{F})$ and $b \in \mathbb{F}_m$, we have $B(x) = Ax+b \mod m,$ for all $x \in \mathbb{F}_m$. If $B$ fixes more than one point, then $A$ must be the $m \times m$ identity matrix, $\mathcal{I}_m$ and $b$ must be the zero vector, $b = 0$. This implies that $B$ is an identity element. Therefore, any non-trivial element fixes at most one element and the minimal degree of $Aut(H)$ is at least $m-1$.

Theorem \ref{thm:QFS.safe} provides necessary conditions such that the hidden subgroup, $K$ corresponding to our proposed cryptosystem, is indistinguishable.

\begin{theorem}\label{thm:QFS.safe}
   Let $p$ be a prime number and $q = p^s$ for some integer $s>0$. Consider an $[n=m\ell, k=(\ell-1)m]_q$ $(\lambda,\ell)$-quasi-twisted code $\mathcal{C},$ where  $m$ is a prime power for some prime other than the characteristic of $\mathbb{F}_q$. Suppose $m < 1/4 \ell (\log_q m + \log_q \ell).$ Then the subgroup $K$ defined as follows.
    \[K = G \wr \mathbb{Z}_2|_f = \Bigg(\bigg(\Big(H_0, s^{-1}H_0s\Big),0\bigg) \bigcup \bigg(\Big(H_0s, s^{-1}H_0\Big),1\bigg)\Bigg),\]
    is indistinguishable.
\end{theorem}
\proof
Let $d$ be the minimal degree of the automorphism group $Aut(H)$ and $\delta > 0$ be a constant. From theorem \ref{thm.indist.} and remark \ref{rem.indist.}, the subgroup $K$ is indistinguishable if $q^{(n-k)^2} \leq n^{an}$ and $|K|^2e^{-\delta d} \leq (n\log n)^{-\omega(1)}$. For some $0< a < 1/4$, we have $q^{(n-k)^2} \leq n^{an}$. 
From here on, $\log$ refers to the logarithm with base $q$ unless stated otherwise. Then, 
\vspace{-2mm}
\begin{equation}
\begin{aligned}
    q^{(n-k)^2} = q^{m^2} & \leq (m\ell)^{am\ell}\\
    m^2 & \leq am\ell\log(m\ell)\\
    m^2 & \leq am\ell(\log m + \log \ell)\\
    m & \leq al(\log m + \log \ell). \\
\end{aligned}
\end{equation}
For $0<a<1/4$, we have $m < 1/4 \ell (\log_q m + \log_q \ell)$. 

We use theorem \ref{thm.indist.} to check for indistinguishability of $K$. Recall that $|K| = 2|H_0|^2 = 2|Aut(H)|^2$, where $|Aut(H)| \leq m(m-1)$. For $d \geq m-1$, and constant $\delta>0$, substituting values for $|K|$ and $d$, we get 
\begin{equation}
    |K|^2e^{-\delta d} \leq (2m^4)^2e^{-\delta m}.
\end{equation}
%\vspace{-2mm}
Therefore, $|K|^2e^{-\delta d} \leq 4m^8e^{-\delta m}$. Using the bound on $\ell,$ we can see that $4m^8e^{-\delta m} \leq (m\ell\log (m\ell))^{-\omega(1)}.$ Hence, 
$|K|^2e^{-\delta d} \leq (m\ell\log (m\ell))^{-\omega(1)}$. That is, the hidden subgroup $K$ is indistinguishable.
\qed
\end{proof}

Since the hidden subgroup $K$ corresponding to our cryptosystem is indistinguishable, our cryptosystem is secure against QFS based attacks. 

%-----------------------------------------------------------------------------------
\section{Conclusion}

%\RK{Include why we don't have a numerical analysis. Needs to be polished.}

This paper addresses the rising need for a quantum-secure cryptosystem in response to the expanding capabilities of quantum computing. The McEliece cryptosystem (equivalently, the Niederreiter cryptosystem) is a potential solution, given that it is quantum-secure and facilitates faster encryption and decryption. However, the large key sizes make it not feasible for implementation. We analyzed different modifications of the McEliece/Niederreiter cryptosystems and found a variant of the Niederreiter cryptosystem based on quasi-cyclic codes (presented in \cite{Upen-Pap}). This work inspired us to further study the quasi-cyclic codes and a more generalized code family - quasi-twisted codes. We explored the possibility of replacing Goppa codes in the Niederreiter cryptosystem with quasi-twisted codes to develop an implementable as well as secure code-based cryptosystem.

It is necessary for a code to have an efficient decoding algorithm in order to adopt it in a code-based cryptosystem. One main challenge in developing a  Niederreiter-like cryptosystem based on quasi-twisted codes was the lack of an efficient decoding algorithm. In this work, we propose a novel decoding algorithm for QT codes. In order to provide a decoding procedure, we show a new lower bound on the minimum distance of quasi-twisted codes - a Hartmann-Tzeng(HT)-like bound, $d^*$ (theorem \ref{HT-like Bound-Th}). We then describe a syndrome-based decoding algorithm that can correct up to $\varepsilon = \frac{d^* -1}{2}$ errors (section \ref{subsec:dec-algo}, table \ref{tab:dec.algo}). Our algorithm operates with a time complexity that is quadratic in relation to the length of the code.

Having described an efficient decoding procedure for quasi-twisted codes, we present a Niederreiter-like cryptosystem based on quasi-twisted code (section \ref{subsec:cryptosystem.desc}). We draw motivation from the quasi-cyclic variant (\cite{Upen-Pap}) and develop a cryptosystem that uses $[m\ell, (\ell-1)m]_q$ $(\lambda,\ell)$-quasi-twisted codes, where $m =q^r$ is a prime power and $\ell$ is bounded above by a polynomial in $m.$ We show that our cryptosystem is secure against some classical attacks. To prove quantum security, we consider the scrambler permutation attack and prove that our cryptosystem can resist such an attack. This is done by proving the indistinguishability of the hidden subgroup in the hidden subgroup problem, by quantum Fourier sampling (QFS). To the best of our knowledge, the quasi-cyclic code-based cryptosystem is not proven to be insecure against any known attack. Therefore, we expect that this would be the case for our cryptosystem as well. However, we only prove that our cryptosystem is secure against QFS-based quantum attacks. We leave the analysis of the security of our cryptosystem against attacks that are not based on QFS, such as variants of the  Sidelnikov-Shestakov attack \cite{sidelnikov.GRS.attack} that uses the dimension of the Schur square of the underlying code for future work. Furthermore, the parameter bounds presented in this work may not represent the most optimal values. There is potential for these bounds to be further refined and improved through more in-depth analysis in future research, which could lead to enhanced performance and more accurate results. Currently, we are unable to perform a numerical analysis of our cryptosystem and its security given the lack of existing examples of quasi-twisted codes with the required structure and leave the analysis to a later date depending on development of required codes.

 \section*{Acknowledgment}
 The authors are grateful to Dr. Krishna Kaipa and Dr. Upendra Kapshikar for useful comments and discussions. They express their gratitude to Prof. N. Aydin for his valuable suggestions and comments. They also acknowledge Dr. Simona Samardjiska for her useful insights.

\end{document}